
\documentclass[letterpaper, 10 pt, conference]{ieeeconf}  % Comment this line out if you need a4paper

\IEEEoverridecommandlockouts                              % This command is only needed if 
                                                          % you want to use the \thanks command

\overrideIEEEmargins                                      % Needed to meet printer requirements.

% See the \addtolength command later in the file to balance the column lengths
% on the last page of the document
\usepackage{graphicx}        
\usepackage{amssymb}
\usepackage{graphicx}
\usepackage{algorithm}
\usepackage{algpseudocode}
\usepackage{amsmath,todonotes}
\usepackage{mathtools}
\usepackage{comment}
\usepackage{multirow}
\usepackage{booktabs}
\usepackage{cite}
\usepackage{balance}

\newtheorem{proposition}{Proposition}
\newtheorem{lemma}{Lemma}

\newtheorem{theorem}{Theorem}

\newtheorem{definition}{Definition}

\title{\LARGE \bf
A Network Monitoring Game with Heterogeneous \\ Component Criticality Levels
}
\author{Jezdimir Milo\v{s}evi\'{c}$^1$, Mathieu Dahan$^2$, Saurabh Amin$^3$, Henrik Sandberg$^1$% <-this % stops a space
\thanks{$^1$The Division of Decision and Control Systems, School of Electrical Engineering and Computer Science, KTH Royal Institute of Technology, Stockholm, Sweden. Emails: \{jezdimir, hsan\}@kth.se; 
$^2$Center for Computational Engineering, Massachusetts Institute of Technology, Cambridge, Massachusetts, Email: mdahan@mit.edu
$^3$Department of Civil and Environmental Engineering, and Institute for Data, Systems, and Society, Massachusetts Institute of Technology, Cambridge. Email: amins@mit.edu. 
}
}

\begin{document}

\maketitle
%%\doublespacing
%
\begin{abstract} 
%We address the problem of network monitoring using randomized sensing strategies. 
%The problem is modeled as a zero sum game where the operator aims minimizing loss caused by the attacker.
%The game is challenging to solve in large scale networks, since the number of strategies of the
%operator exponentially grows with the number of sensors she needs to place. 
%Thus, standard linear programming approach for solving zero sum games is not feasible to apply. 
%To gain understanding of the game, we characterize equilibrium strategies in several game instances. 
%Besides providing us with valuable intuition about the players’ behavior, some of these strategies can be used as a suboptimal sensing strategy in the general case. 
%We also investigate on examples of large scale networks if the numerical procedure called column generation can be used to for efficiently solving the game and improving some of the previously introduced randomized strategies. 
We consider an attacker-operator game for monitoring a large-scale network that is comprised on
components that differ in their criticality levels. 
In this zero-sum game, the operator seeks to position a limited number of sensors to monitor the network against an attacker who strategically targets a network component. 
The operator (resp. attacker) seeks to minimize (resp. maximize) the network loss. 
To study the properties of mixed-strategy Nash Equilibria of this game, we first study two simple instances: 
(i)~When component sets monitored by individual sensor locations are mutually disjoint;
(ii)~When only a single sensor is positioned, but with possibly overlapping monitoring component sets. 
Our analysis reveals new insights on how criticality levels impact the players’ equilibrium strategies. 
Next, we extend a previously known approach to obtain an approximate Nash equilibrium for the general case of the game. 
This approach uses solutions to minimum set cover and maximum set packing problems to construct an approximate Nash equilibrium. 
Finally, we implement a column generation procedure to improve this solution and numerically evaluate the performance of our approach.
\end{abstract}

% against a strategic adversary for monitoring large-scale water networks
%(i.e., multiple sensors positioned on locations with overlapping monitored component sets)
\section{Introduction}

Critical infrastructure networks such as water distribution or power networks are attractive targets for malicious attackers~\cite{sandberg2015cyberphysical,weerakkody2019challenges}. 
In fact, successful attacks against these networks have already been documented~\cite{slay2007lessons,case2016analysis}, amplifying the need for development of effective defense strategies. 
An important part of a defense strategy is attack detection~\cite{nist}, which can be achieved  
%One way is by designing advanced detection schemes such as watermark generators~\cite{7011170,7738534} or detection filters~\cite{pasqualetti2013attack,fawzi2014secure}. 
by deployment of sensors to monitor the network~\cite{dan2010stealth,2017arXiv170500349D,krause2011randomized}. 
%The assumption behind these works is that secured sensors cannot be compromised, so the attacks in their sensing radius are always detected.
However, if a network is large, it is expected that the number of sensors would be insufficient to enable monitoring of the entire network. 
%For example, submodular structure of allocation problems is often exploited for efficiently approximating their solution In general, allocation problems can be addressed using tools from combinatorial optimization~\cite{tzoumas2017resilient,submcontrol2}.
%However, allocation problems in security context can also
Hence, the problem that naturally arises is how to strategically allocate a limited number of sensors in that case. 
%to detect the presence of an attacker

We adopt a game theoretic approach to tackle this problem. 
So far, game theory has been used for studying various security related problems~\cite{zhu2015game,MIAO201855,7498672,shreyasinvestment,pita2008deployed,washburn1995two, bertsimas2016power}, including the ones on sensor allocation. 
The existing works considered developing both static~\cite{stack_metju,pirani2018game,ren2018secure} and randomized (mixed) monitoring strategies~\cite{2017arXiv170500349D,krause2011randomized}.
Our focus is on randomized strategies, which are recognized to be more effective than static once the number of sensors to deploy is limited~\cite{krause2011randomized,2017arXiv170500349D}.

Our game model is related to the one in~\cite{2017arXiv170500349D}.
The network consists of the components to be monitored, and 
sensor locations can be selected from the predefined set of nodes.
From each node, attacks against a subset of components can be detected.  
However, while~\cite{2017arXiv170500349D} studies the game where the players (the operator and the attacker) make decisions based on so-called detection rate, in our game the decisions are made based on the component criticality. 
This game model is motivated by the risk management process, where one first conducts a risk assessment to identify the critical components in the system, and then allocates resources based on the output of the assessment~\cite{nist}. 
Particularly, the operator seeks placing a limited number of sensors to minimize the loss that is defined through the component criticality, while the attacker seeks to attack a component to maximize it.
%Both of the players are allowed to use randomized strategies. 

A monitoring strategy we aim to find is the one that lies in a Nash Equilibrium (NE) of the game. 
Since our game is a zero-sum game, a NE can be calculated by solving a pair of linear programs~\cite{basar1999dynamic}. 
However, these programs are challenging to solve in our case, since the number of actions of the operator grows rapidly with the number of sensors she has at disposal.
%Thus, it is not tractable to use standard linear program solvers to calculate a NE for realistic instance of our game.
Moreover, a NE calculated using this numerical procedure usually does not provide us with much intuition behind the players' equilibrium strategies. 
Our objective in this work is to: 
(i)~Study how the components' criticality influences the equilibrium strategies of the players; 
(ii)~Investigate if some of the tools from~\cite{2017arXiv170500349D} can be used to calculate or approximate an equilibrium monitoring strategy for our game in a tractable manner.

%
%However, finding and analyzing a NE in our game becomes challenging once the network is large. 
%Particularly, the number of actions of the operator grows rapidly with the number of sensors she has at disposal. 
%Thus, it is not tractable to use a standard linear programing approach~\cite{basar1999dynamic} to analyze and calculate a NE.
%Our objective in this work is to: 
%(i)~Study how the component criticality influences the equilibrium strategies of the players;  
%(ii)~Investigate if some of the tools from~\cite{2017arXiv170500349D} can be used to calculate or approximate an equilibrium monitoring strategy for our game in a tractable manner.
%finding and analyzing a NE in our game becomes challenging once the network is large.
%Particularly, the number of actions of the operator has \textcolor{red}{polynomial growth} with the number of sensors she has at disposal. 
%Thus, it is not tractable to use standard linear program solvers to calculate a NE.
%Therefore, an alternative way needs to be found to analyze and calculate a monitoring strategy needs to be found

Our contributions are threefold. 
%\textcolor{red}{Firstly, to gain understanding of the game, we characterize a NE for two game instances. }
%These equilibrium strategies provide us with intuition behind equilibrium strategies and extend some of the results derived in the previous literature.
%Additionally, some of these strategies can be used as a suboptimal solutions in a general case. 
Firstly, for a game instance where component sets monitored by individual sensor locations are mutually disjoint, we characterize a NE analytically (Theorem~\ref{theorem:analytical_solution_special_case}). 
This result provide us with valuable intuition behind the equilibrium strategies, and reveals some fundamental differences compared to the game from~\cite{2017arXiv170500349D}.
Particularly, the result illustrates how the components' criticality influences strategies of the players, that the resource limited operator can leave some of the noncritical components unmonitored, and that the attacker does not necessarily have to attack these components.  
We also consider a game instance where a single sensor is positioned but the monitoring sets are allowed to overlap, and extend some of the conclusions to this case (Proposition~\ref{theorem:solution_special_case_3}). 

%Particularly, a NE is characterized through solutions of two linear programs that are dual of each other   (Proposition~\ref{theorem:solution_special_case_3}). 
%

%\textcolor{red}{Using similar reasoning, we characterize a NE in a game instance where a single sensor is positioned, but monitoring sets are allowed to overlap. 
%Particularly, a NE is characterized through solutions of two linear programs that are dual of each other   (Proposition~\ref{theorem:solution_special_case_3}). 
%The intuition behind this equilibrium is aligned with the one in the first game instance, but in addition provide insights how overlaps in between monitoring sets influence the players' strategies. 
%It is also worth mentioning that this approach can be used for finding a NE in the game from~\cite{2017arXiv170500349D}, and represents a novel way to do so. }
%
% 

Secondly, we show that the mixed strategies proposed in~\cite{2017arXiv170500349D} can be used to obtain an approximate NE.  
In this approximate NE, the monitoring (resp. attack) strategy is formed based on a solution to minimum set cover (resp. maximum set packing) problem. 
A similar approach for characterizing equilibria was also used in~\cite{pita2008deployed,washburn1995two, bertsimas2016power}, yet for specific models and player resources.     
Our analysis reveals that these strategies may represent an exact or a relatively good approximation of a NE if the component criticality levels are homogeneous, while the approximation quality decreases if the gap in between the maximum and the minimum criticality level is large~(Theorem~\ref{thm:mix_strategies_diff_indexes}).  
%The attacker's strategy is to target every component of a maximum set packing with equal probability, which makes the attack harder to detect. 

Finally, we discuss how to improve the set cover monitoring strategy from the above-mentioned approximate equilibrium.  
The first approach exploits the intuition from Theorem~\ref{theorem:analytical_solution_special_case}. 
Particularly, if a group of the components have a criticality level sufficiently larger then the others, we show that the strategy can be improved by a simple modification (Proposition~\ref{thm:binary_weights}). 
The second approach is by using a column generation procedure (CGP)~\cite{desrosiers2005primer}.
This procedure was suggested in~\cite{2017arXiv170500349D} as a possible way to improve the set cover strategy, but it was not tested since the strategy already performed well. 
We show that CGP can be applied in our game as well, and test it on benchmarks of large scale water networks. 
The results show that: 
(i)~The running time of CGP rapidly grows with the number of deployed sensors, but the procedure can still be used for finding an equilibrium monitoring for water networks of several hundred nodes; 
(ii)~Running a limited number of iterations of CGP can considerably improve the set cover monitoring strategy.

%This method is interesting because of the two reasons. 
%Firstly, it generalizes the strategies from the previous two game instances \textcolor{red}{(as well as strategies from some other related games such monitoring~\cite{2017arXiv170500349D}, hide and seek~\cite{von1953certain}, and ambush~\cite{garnaev2012search} games).}
%Secondly, compared to the to the equilibrium strategies from~\cite{2017arXiv170500349D}, the players switch their strategies.
%That is, the operator now uses set packing based strategy, while the attacker uses the strategy based on set covers.
%Finally, we summarize the main factors that affect equilibrium strategies of the players, and discuss the main obstacles in finding a unified approach for solving the game. 

%
The paper is organized as follows. 
In Section~\ref{section:security_game}, we introduce the game. 
In Sections~\ref{section:exact}, we discuss two special game instances. 
In Sections~\ref{section:approximate_strategies}, we show that the strategies from~\cite{2017arXiv170500349D} can be used to obtain an approximate NE, and discuss how the monitoring strategy from this approximate equilibrium can be further improved.
%In Section~\ref{section:conclusion}, we briefly introduce column generation procedure, and 
In Section~\ref{section:simulations}, we test CGP.
In Section~\ref{section:conclusion}, we conclude.

%%Appendix contains proofs of the results. 

\section{Game Description} \label{section:security_game}

%In this section we introduce the system model, formulate the monitoring game $\Gamma$,  
%and explain why $\Gamma$ is difficult to solve and analyze.
%Some of the motivating applications for studying $\Gamma$ are provided in Appendix~\ref{subection:mot_app}.  
%
%\subsection{System Model}
%perelman2016sensor
Our network model considers a set of components $\mathcal{E}$$=$$\{e_1,\ldots,e_m\}$ that can be potential targets of an attacker, and a set of nodes $\mathcal{V}$$=$$\{v_1,\ldots,v_n\}$ that can serve as sensor positions for the purpose of monitoring.  
By placing a sensor at node $v$, one can monitor a subset of components $E_v$$ \subseteq$$ \mathcal{E}$, which we refer to as the monitoring set of $v$.
Without loss of generality, we assume $E_v$$ \neq$$ \emptyset$, and that every component can be monitored from at least one node. 
If sensors are positioned at a subset of nodes $V$$\subseteq $$\mathcal{V}$, then the set of monitored components can be written as $E_V$$\coloneqq$$\cup_{v\in V}E_v$. 
We refer the reader to Fig.~\ref{figure:example_0} for an illustration of monitoring~sets. 

     \begin{figure}[t]   
    \centering
  \includegraphics[width=75mm]{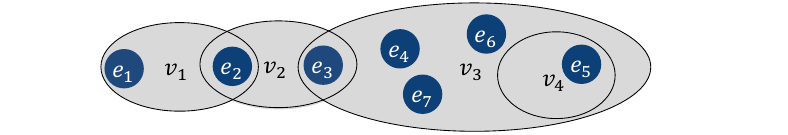}
  \caption{The set of nodes (resp. components) is $\mathcal{V}$$=$$\{v_1,\ldots,v_4\}$ (resp. $\mathcal{E}$$=$$\{e_1,\ldots,e_7\}$). The monitoring sets are $E_{v_1}$$=$$\{e_1,e_2\}$, $E_{v_2}$$=$$\{e_2,e_3\}$, $E_{v_3}$$=$$\{e_3,\ldots,e_7\}$, and $E_{v_4}$$=$$\{e_5\}$. }
  \label{figure:example_0}
\end{figure}
% The set of nodes is $\mathcal{V}$$=$$\{v_1,\ldots,v_4\}$(resp. $\mathcal{E}$$=$$\{e_1,\ldots,e_7\}$). 
%  The monitoring sets are $E_{v_1}$$=$$\{e_1,e_2\}$, $E_{v_2}$$=$$\{e_2,e_3\}$, $E_{v_3}$$=$$\{e_3,\ldots,e_7\}$, and $E_{v_4}$$=$$\{e_5\}$.  

%\subsection{Game Description}
To study the problem of strategic sensor allocation in the network, we adopt a game-theoretic approach. 
Specifically, we consider a zero sum game $\Gamma$$=$$\langle\{1,2\},(\mathcal{A}_1,\mathcal{A}_2),l\rangle$, in which Player~1 (P1) is the operator and Player~2 (P2) is the attacker. 
P1 can select up to $b_1$ nodes from $\mathcal{V}$ to place sensors and monitor some of the network components from $\mathcal{E}$.
We assume that these sensors are protected, in that they are not subject to the actions of P2. 
P2 seeks to select a component from $\mathcal{E}$ to attack.
We assume that if P1 successfully detects the attack, she can start a response mechanism to mitigate the damage.
Thus, in our model, the attack is successful only if it remains undetected by P1. 
Based on the discussion, the action set of P1 (resp. P2) is $\mathcal{A}_1$$=$$\{V$$ \in $$2^\mathcal{V}$$|$$\hspace{1mm}|V|$$\leq$$ b_1 \}$ 
(resp. $\mathcal{A}_2$$=$$\mathcal{E}$). 
The loss function $l$$:$$ \mathcal{A}_1$$ \times$$ \mathcal{A}_2 $$\longrightarrow$$ \mathbb{R}$ is defined by
\begin{equation} \label{eqn:index_and_set_x}
l(V,e):=\begin{cases}
w_{e}, \hspace{2.5mm} e \notin E_V,   \\
\hspace{2mm}0,\hspace{3mm}e \in E_V, 
\end{cases} 
\end{equation}
where $w_{e}$$\in$$ (0,1]$ is a known constant whose value indicates the level of criticality of the component $e$; the assumption $w_e$$>$$0$ is without loss of generality.
For practical purposes, for each $e$$\in$$\mathcal{E}$, $w_e$ can be evaluated as the normalized monetary loss to P1, negative impact on the overall system functionality when the component $e$ is compromised by P2, or a combination of several factors.
We assume that P1 (resp. P2) seeks to minimize (resp. maximize)~$l$.

The players are allowed to use mixed strategies.
A mixed strategy of a player is a probability distribution over the set of her pure actions. 
Particularly, mixed strategies are defined~as
\begin{align*}
& \sigma_{1} \in \Delta_1, \hspace{0.2mm}\Delta_1=\bigg\{\sigma_{1} \in [0,1]^{|\mathcal{A}_1|}\bigg| \sum_{V\in \mathcal{A}_1} \sigma_1(V) =1 \bigg\},\\
& \sigma_{2} \in \Delta_2, \hspace{0.2mm}\Delta_2=\bigg\{\sigma_{2} \in [0,1]^{|\mathcal{A}_2|}\bigg| \sum_{e \in \mathcal{A}_2} \sigma_2(e) =1\bigg\},
\end{align*}  
where $\sigma_{1}$ (resp. $\sigma_{2}$) is a mixed strategy of P1 (resp. P2), and $\sigma_1(V)$ (resp. $\sigma_2(e)$) is the probability the action $V$ (resp. $e$) is taken.  
One interpretation of mixed strategy $\sigma_1$ for P1 is that it provides a randomized sensing plan; similarly for P2. For example, in a day-to-day play in which both players play myopically, P1 (resp. P2) selects sensor placement (resp. attack) plan
according to sampling from probability distribution $\sigma_1$ (resp. $\sigma_2$). 

%Once P1 employs a mixed strategy, sensor locations need to be changed over time. 
%Every time P1 decides to change sensor locations, he/she randomly selects one of the pure strategies from $\mathcal{A}_1$ based on $\sigma_1$, and places sensors according to this pure strategy.
%Once P1 employs a mixed strategy, she determines sensor locations randomly based on $\sigma_1$. 
%Randomized placements are beneficial for P1 for at least two reasons. 
%Firstly, by changing sensor locations, P1 achieves better coverage of a large scale system given limited detection resources. 
%Secondly, by randomly selecting locations, P1 makes it more difficult for P2 to know which components are monitored.
%Similarly, P2 randomly decides which component to attack based on $\sigma_2$.

In the analysis that follows, it is convenient to characterize $\sigma_1$ through the marginal probabilities.  
The marginal probability $\rho_{\sigma_{1}}(v)$ is given by
\begin{align}\label{eqn:marginal}
\rho_{\sigma_{1}}(v) \coloneqq \sum_{ V \in \mathcal{A}_1, v \in V}\sigma_1(V),
\end{align}
and it represents the probability that a sensor is placed at $v$ if P1 plays $\sigma_{1}$. 
Next, given $(\sigma_1$$,$$\sigma_2)$$\in$$ \Delta_1 $$\times $$\Delta_2$, the expected loss is defined by
 \begin{align*}
L(\sigma_1,\sigma_2) \coloneqq \sum_{V \in \mathcal{A}_1} \sum_{e \in \mathcal{A}_2} \sigma_1(V) \sigma_2(e) l(V,e).
\end{align*}
We use $L(V,\sigma_2)$ (resp. $L(\sigma_1,e)$) to denote the case where $\sigma_1(V)$$=$$1$ (resp. $\sigma_2(e)$$=$$1$) for some $V $$\in$$ \mathcal{A}_1$ (resp. $e$$ \in$$ \mathcal{A}_2$). 

%This means that P1 (resp. P2) plays a pure action $V$ (resp. $e$).

We are concerned with strategy profile(s) that represent a NE of $\Gamma$.
A strategy profile $(\sigma^*_1$$,$$\sigma^*_2)$ is a NE if
\begin{align*}
L(\sigma_1^{*},\sigma_2) \leq L(\sigma_1^{*},\sigma_2^{*}) \leq L (\sigma_1,\sigma_2^{*}),
\end{align*} 
holds for all $\sigma_{1}$$ \in$$ \Delta_1$ and  $\sigma_{2}$$ \in$$ \Delta_2$.
We refer to $L (\sigma_1^{*},\sigma_2^{*})$ as the value of the game. 
Thus, given that P2 plays according to $\sigma_2^{*}$, P1 cannot perform better than by playing according to randomized monitoring strategy $\sigma_1^{*}$. 
%For this reason, $\sigma_1^{*}$ represents a good randomized monitoring strategy to implement.  
Additionally, in a zero sum game, the value of the game is equal for every NE. 
%(i)~The value of the game is equal for every NE; 
%(ii)~If $(\sigma^*_{11},\sigma^*_{21})$ and $(\sigma^*_{12},\sigma^*_{22})$ are NE, then $(\sigma^*_{11},\sigma^*_{22})$ and $(\sigma^*_{12},\sigma^*_{21})$ are also NE. 
Hence, it suffices for P1 to find a single randomized monitoring strategy that lies in equilibrium. 
Similar argument holds for P2's randomized attack strategy $\sigma_2^{*}$.  

We say that a strategy profile $(\sigma^\epsilon_1,\sigma^\epsilon_2)$ is an $\epsilon$--NE of $\Gamma$ if
\begin{align*}
L(\sigma^\epsilon_1,\sigma_2)-\epsilon \leq L (\sigma^\epsilon_1,\sigma^\epsilon_2) \leq L(\sigma_1,\sigma^\epsilon_2)+\epsilon, \hspace{1mm}\epsilon\geq 0,
\end{align*}
for all $\sigma_{1} $$\in $$\Delta_1$ and  $\sigma_{2}$$ \in $$\Delta_2$.
In this case, if P2 plays according to $\sigma^\epsilon_2$, P1 may be able to decrease her loss by deviating from $\sigma^\epsilon_1$, but not more than $\epsilon$. 
Thus, if $\epsilon$ is small enough, $\sigma^\epsilon_{1}$ represents a good suboptimal strategy; similarly for P2. 
%\textcolor{red}{We also define a node basis $V_{\sigma_1}$$=$$\{v\in\mathcal{V}|\rho_{\sigma_{1}}(v) $$>$$0\}$. 
%In other words, $V_{\sigma_1}$ is a subset of nodes that are inspected with non-zero probability. }

Since $\Gamma$ is a zero-sum game with finite number of player actions, equilibrium strategies and the value of the game in a NE exists, and can be obtained by solving the following pair of linear programs~\cite{basar1999dynamic} 
\begin{equation*}\label{eqn:original_LPs}
  \begin{aligned} 
(\text{LP}_1)\hspace{2mm}&\underset{z_1,\sigma_1 \in \Delta_1 }{\text{minimize }} z_1 \hspace{2mm}\text{subject to}\hspace{2mm}L(\sigma_1,e)\leq z_1, \forall e\in\mathcal{A}_2, \\
(\text{LP}_2)\hspace{2mm}&\underset{z_2,\sigma_2\in \Delta_2}{\text{maximize }} z_2\hspace{2mm}
\text{subject to}\hspace{2mm}L(V,\sigma_2)\geq z_2, \forall V\in \mathcal{A}_1.
 \end{aligned}
 \end{equation*}
Yet, these LPs can be computationally challenging to solve using standard optimization solvers for realistic instances of $\Gamma$.
Namely, since the cardinality of $\mathcal{A}_1$ rapidly grows with respect to $b_1$, so does the number of variables (resp. constraints) of $\text{LP}_1$ (resp.  $\text{LP}_2$). 
In the following section, we provide structural properties of equilibria for two simple but instructive cases.
Subsequently, we discuss an approach to compute  $\epsilon$--NE, and then discuss how to further improve the monitoring strategy from this $\epsilon$--NE.

%In the first part of the paper (Sections~\ref{section:exact}--\ref{section:approximate_strategies}), we characterize  equilibrium strategies for several game instances, without having to solve the aforementioned linear programs. 
%Beside providing us with valuable intuition for solving the game, these equilibrium strategies can be used as a suboptimal solution for the general case of the game.
%  \textcolor{red}{In the second part of the paper (UNDER CONSTRUCTION), we discuss how the aforementioned strategies can be implemented and further improved using the so called column generation procedure.
%We then test on benchmarks of large scale networks if this numerical procedure is feasible to implement, and study how the strategies perform with and without the improvement.  }
%Before we proceed with analyzing and solving $\Gamma$, we introduce some motivating applications for studying this game. 

\section{Exact Equilibrium Strategies}\label{section:exact}
In this section, we first study the game instance in which the monitoring sets are mutually disjoint.
We then analyze the game in which  the monitoring sets can overlap with each other, but P1 can only use a single sensor ($b_1$$=$$1$). 

\subsection{Mutually Disjoint Monitoring Sets} \label{section:first_special_case}

%Let the set of nodes and components be of the same cardinality, that is, $\mathcal{V}=\{v_1,\ldots,v_n\}$ and $\mathcal{E}=\{e_1,\ldots,e_n\}$.
%Let $E_{v_i}=e_i$, $\forall v_i \in \mathcal{V}$. 
We first derive a NE for an instance of $\Gamma$ where the monitoring sets are mutually disjoint, that is, $E_{v_i} $$\cap$$ E_{v_j}$$=$$\emptyset$ holds for any two nodes $v_i $$\neq $$v_j$. 
%This implies that every component can be monitored from only one node. 
%We also assume that $b_1<n$, so P1 cannot monitor all the nodes using $b_1$ sensors. 
Let $e^*_i$ denote the component from $E_{v_i}$ with the largest criticality $w_{e^*_i}$. 
One can identify such a component for each of the monitoring sets, and assume without loss of generality $w_{e^*_1}$$\geq $$\ldots $$\geq$$ w_{e^*_n}$. 
For given~$b_1$, we define $Z(b_1)$ as follows:
\begin{equation} \label{eqn:Zset}
Z(b_1)=\bigg\{j\in \{1,\ldots,n\} \bigg|\frac{j-b_1}{\sum_{i=1}^{j}1/w_{e^*_i}}\leq w_{e^*_j}  \bigg\}.
\end{equation}
We argue that this set determines nodes on which P1 places sensors in a NE. 
Particularly, let $p$ be the largest element of $Z(b_1)$, $E_p$$=$$\{e^*_1,\ldots,e^*_p\}$, $S_p$$=$$\sum_{i=1}^{p}1/w_{e^*_i}$, and
$(\sigma^*_1,\sigma^*_2)$ be a strategy profile that satisfies the following conditions: 
\begin{align}
\label{eqn:def_strategy_analytical}
\rho_{\sigma^*_1}(v_j )&=
\begin{cases} 1-\frac{p-b_1}{w_{e^*_j}S_p},\hspace{2mm}j\leq p,  \\
\hspace{13mm} 0,\hspace{2.2mm}j>p,
\end{cases}\\
\label{eqn:att_strategy_analytical}
\sigma^*_2(e)&=
\begin{cases}\frac{1}{w^*_{e} S_p},\hspace{2mm}e \in E_p,  \\
\hspace{5.5mm} 0,\hspace{2mm}\text{otherwise}. 
\end{cases}
\end{align}
Lemma~\ref{lemma:existance_eq_1} establishes existence of $(\sigma^*_1,\sigma^*_2)$. 
In Theorem~\ref{theorem:analytical_solution_special_case}, we show that this strategy profile is a NE.  

\begin{lemma}~\label{lemma:existance_eq_1}
There exists at least one strategy profile $(\sigma^*_1,\sigma^*_2)$ that satisfies~\eqref{eqn:def_strategy_analytical}--\eqref{eqn:att_strategy_analytical}.
\end{lemma}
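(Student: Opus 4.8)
The plan is to handle the two players separately, since $\sigma_2^*$ is fully pinned down by~\eqref{eqn:att_strategy_analytical} whereas $\sigma_1^*$ is specified only through its node marginals in~\eqref{eqn:def_strategy_analytical}. For the attacker, I would simply verify that $\sigma_2^*$ is a probability vector: each entry $1/(w_{e^*_i} S_p)$ is positive, and summing over $e \in E_p$ gives $\frac{1}{S_p}\sum_{i=1}^p 1/w_{e^*_i} = S_p/S_p = 1$ by the definition of $S_p$, so $\sigma_2^* \in \Delta_2$ with no further work. The real content of the lemma is therefore the existence of a genuine mixed strategy $\sigma_1^* \in \Delta_1$ whose marginals match the right-hand side of~\eqref{eqn:def_strategy_analytical}.

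For the operator I would proceed in two stages: first check that the prescribed marginals are \emph{feasible} numbers, and then realize them by an actual distribution over $\mathcal{A}_1$. For feasibility, I first note $p \geq b_1$: taking $j = b_1$ in~\eqref{eqn:Zset} gives $(b_1 - b_1)/S_{b_1} = 0 \leq w_{e^*_{b_1}}$, so $b_1 \in Z(b_1)$ and hence $p = \max Z(b_1) \geq b_1$. Thus $p - b_1 \geq 0$, which immediately yields $\rho_{\sigma_1^*}(v_j) \leq 1$. For the lower bound $\rho_{\sigma_1^*}(v_j) \geq 0$, I use that $p \in Z(b_1)$ gives $(p-b_1)/S_p \leq w_{e^*_p}$, and that the ordering $w_{e^*_1} \geq \cdots \geq w_{e^*_p}$ implies $w_{e^*_j} \geq w_{e^*_p}$ for $j \leq p$; together these give $(p-b_1)/(w_{e^*_j} S_p) \leq 1$. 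Finally, summing the marginals over $j \leq p$ collapses via $\sum_{j=1}^p 1/w_{e^*_j} = S_p$ to $p - (p - b_1) = b_1$.

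The realization step is the crux. I now have a vector $\rho \in [0,1]^n$ supported on $\{v_1,\dots,v_p\}$ with $p \geq b_1$ and integer coordinate sum $\sum_j \rho_{\sigma_1^*}(v_j) = b_1$. Such a vector is exactly a point of the base polytope of the rank-$b_1$ uniform matroid on $\{v_1,\dots,v_p\}$, whose vertices are the indicator vectors of the $b_1$-element subsets; by Carath\'eodory's theorem it is a convex combination of these, and reading off the coefficients as probabilities produces a $\sigma_1^*$ supported on subsets of size $b_1$, hence $\sigma_1^* \in \Delta_1$, with the required marginals. Equivalently, and more self-contained for the paper, I would give the explicit ``circle'' (systematic-sampling) construction: lay the lengths $\rho_{\sigma_1^*}(v_1),\dots,\rho_{\sigma_1^*}(v_p)$ as consecutive arcs on a circle of circumference $b_1$, draw one offset $U$ uniform on $[0,1)$, and select the nodes whose arcs are hit by the $b_1$ equally-spaced points $U, U+1, \dots, U + b_1 - 1 \pmod{b_1}$; since each arc has length at most $1$, this picks exactly $b_1$ distinct nodes and node $v_j$ is chosen with probability $\rho_{\sigma_1^*}(v_j)$.

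The only genuine obstacle is this last realization step, and it hinges entirely on the two facts established in the feasibility stage, namely that each marginal lies in $[0,1]$ and that the marginals sum to the integer $b_1$; once those hold, any of the standard marginal-realization arguments applies and the remainder is bookkeeping. I would also flag separately the degenerate regime where the budget $b_1$ exceeds the number of nodes needed to monitor the whole network, since there the displayed marginal formula can exceed $1$ and the intended strategy is simply to place a sensor at every node with probability one.
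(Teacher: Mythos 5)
Your proposal is correct, and its first half coincides with the paper's proof: the verification that $\sigma_2^*$ is a probability vector, and the feasibility computations for the marginals (using $b_1 \in Z(b_1)$ to get $p \geq b_1$ and hence $\rho_{\sigma_1^*}(v_j) \leq 1$, using $p \in Z(b_1)$ together with the ordering $w_{e^*_1} \geq \cdots \geq w_{e^*_p}$ to get $\rho_{\sigma_1^*}(v_j) \geq 0$, and the telescoping sum giving $\sum_v \rho_{\sigma_1^*}(v) = b_1$) are exactly the paper's steps. Where you genuinely diverge is the realization step, which the paper does not prove at all: it concludes $\sigma_1^* \in \Delta_1$ from conditions (i)--(ii) by citing an external result (``Farkas lemma,'' Lemma EC.6 of the reference \cite{2017arXiv170500349D}). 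You instead prove this step self-containedly, either by noting that a vector in $[0,1]^p$ with integer coordinate sum $b_1$ lies in the base polytope of the rank-$b_1$ uniform matroid, whose vertices are indicators of $b_1$-subsets (integrality follows since a vertex of $\{x \in [0,1]^p : \mathbf{1}^T x = b_1\}$ can have at most one fractional coordinate, which the integer sum then forbids), or by Madow's systematic-sampling construction on a circle of circumference $b_1$. Both of your arguments are sound; the circle construction has the added virtue of being an explicit, implementable sampling scheme, whereas the paper's citation leaves the construction opaque. Your closing remark about the degenerate regime is also apt: the assertion $b_1 \in Z(b_1)$ implicitly requires $b_1 \leq n$, and for $b_1 > n$ the marginal formula in \eqref{eqn:def_strategy_analytical} indeed exceeds $1$; the paper silently assumes this away, and flagging it would make the lemma's hypotheses honest.
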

\begin{proof}
To prove existence of $\sigma^*_1$, we need to prove: (i)~$\rho_{\sigma^*_1}(v )$$\in$$[0,1]$ for any $v$$\in$$\mathcal{V}$;
(ii)~$\sum_{v\in \mathcal{V}}\rho_{\sigma^*_1}(v)$$=$$b_1$. 
If (i) and (ii) are satisfied, then $\sigma^*_1 $$\in$$ \Delta_1$ from Farkas lemma (see Lemma EC.6.~\cite{2017arXiv170500349D}).

We begin by proving (i). 
Note that $b_1$$ \in$$ Z(b_1)$, so $p$$\geq $$b_1$. 
Then $\frac{p-b_1}{w_{e^*_j}S_p}$$\geq$$ 0$, which implies $\rho_{\sigma^*_1}(v )$$\leq $$1$ for any $v$$\in$$\mathcal{V}$.
From $w_{e^*_1}$$\geq $$\ldots $$\geq$$ w_{e^*_p}$ and~\eqref{eqn:Zset}, we have 
$\frac{p-b_1}{w_{e^*_1 S_p}}$$\leq$$\ldots$$ \leq$$ \frac{p-b_1}{w_{e^*_p S_p}}$$\leq$$ 1.$
Hence, $0 $$\leq $$\rho_{\sigma^*_1}(v )$ must hold for any $v$$\in$$\mathcal{V}$. 
Thus, (i) is satisfied. 
In addition, we have
$$ \sum_{v\in \mathcal{V}}\rho_{\sigma^*_1}(v)\stackrel{\eqref{eqn:def_strategy_analytical}}{=}p-\frac{p-b_1}{S_p}\sum_{i=1}^p \frac{1}{w^*_{e_i}} = b_1,$$
so (ii) holds as well. Thus, $\sigma^*_1 $$\in$$ \Delta_1$. 

Next, we show $\sigma^*_{2} $$\in$$ \Delta_2$. Firstly, we have from~\eqref{eqn:att_strategy_analytical} that $0$$ \leq$$ \sigma^*_{2}(e) $$\leq$$ 1$ for any $e$$\in$$\mathcal{E}$. Moreover,
$$ \sum_{e\in\mathcal{E}} \sigma^*_{2}(e)\stackrel{\eqref{eqn:att_strategy_analytical}}{=}\frac{1}{S_p} \sum_{e\in E_p}\frac{1}{w^*_{e}}=1,$$
so we conclude $\sigma^*_{2} $$\in$$ \Delta_2$.
\end{proof}

\begin{theorem}\label{theorem:analytical_solution_special_case}
If $E_{v_i} $$\cap $$E_{v_j}$$=$$\emptyset$ holds for any two nodes $v_i$$\neq $$ v_j$ from $\mathcal{V}$, then any strategy profile $(\sigma^*_1,\sigma^*_2)$ that satisfies~\eqref{eqn:def_strategy_analytical}--\eqref{eqn:att_strategy_analytical} is a NE of $\Gamma$.
\end{theorem}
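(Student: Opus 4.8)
The plan is to verify the saddle-point inequalities directly. Since $\Gamma$ is a finite zero-sum game, it suffices to exhibit a scalar $v$ and show that $L(\sigma^*_1,e)\le v$ for every pure attack $e\in\mathcal{E}$ and $L(V,\sigma^*_2)\ge v$ for every pure placement $V\in\mathcal{A}_1$; by bilinearity of $L$ these two families extend to $L(\sigma^*_1,\sigma_2)\le v\le L(\sigma_1,\sigma^*_2)$ for all mixed strategies, and evaluating at $(\sigma^*_1,\sigma^*_2)$ pins $v$ down as the value. I will take $v=(p-b_1)/S_p$. The simplification that makes everything tractable, available precisely because the monitoring sets are disjoint, is that each component $e$ lies in a \emph{unique} set $E_{v_j}$, so $e\in E_V$ if and only if $v_j\in V$; hence $L(\sigma^*_1,e)=w_e\big(1-\rho_{\sigma^*_1}(v_j)\big)$ depends on $\sigma^*_1$ only through the marginal at that single node, and Lemma~\ref{lemma:existance_eq_1} guarantees such a $\sigma^*_1$ exists.

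First I would treat the attacker side. Substituting the marginals from~\eqref{eqn:def_strategy_analytical}, for $e=e^*_j$ with $j\le p$ the factor $w_{e^*_j}$ cancels and $L(\sigma^*_1,e^*_j)=(p-b_1)/S_p=v$, so every component of $E_p$ yields exactly $v$. For a non-maximal $e\in E_{v_j}$ with $j\le p$ one has $w_e\le w_{e^*_j}$, scaling the loss below $v$. The only delicate case is $e\in E_{v_j}$ with $j>p$, whose monitoring node carries zero probability, giving $L(\sigma^*_1,e)=w_e\le w_{e^*_{p+1}}$; here I must show $w_{e^*_{p+1}}\le v$. This is exactly where the maximality of $p$ enters: since $p$ is the largest element of $Z(b_1)$, the index $p+1$ violates the defining inequality of~\eqref{eqn:Zset}, and clearing $S_{p+1}=S_p+1/w_{e^*_{p+1}}$ in that violated inequality rearranges to precisely $w_{e^*_{p+1}}<(p-b_1)/S_p=v$. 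This closes all cases and gives $L(\sigma^*_1,e)\le v$.

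Next I would treat the operator side. Using disjointness again, against $\sigma^*_2$ the undetected indicator for $e^*_j$ is $\mathbf{1}\{v_j\notin V\}$, so
$$L(V,\sigma^*_2)=\sum_{j=1}^{p}\frac{1}{w_{e^*_j}S_p}\,w_{e^*_j}\,\mathbf{1}\{v_j\notin V\}=\frac{|\{\,j\le p:\ v_j\notin V\,\}|}{S_p}.$$
Since $|V|\le b_1$, at most $b_1$ of the nodes $v_1,\dots,v_p$ can lie in $V$, so the numerator is at least $p-b_1$ and $L(V,\sigma^*_2)\ge(p-b_1)/S_p=v$. A direct evaluation of $L(\sigma^*_1,\sigma^*_2)$ confirms it equals $v$, completing the saddle-point verification.

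The main obstacle is the unmonitored-component case in the attacker step: the entire construction hinges on having chosen $p$ so that every leftover component is less critical than the equalized loss $v$, and making this rigorous is exactly the content of translating the statement ``$p+1\notin Z(b_1)$'' into the bound $w_{e^*_{p+1}}<v$. The remaining inequalities are bookkeeping once the marginal reduction and the cardinality bound $|V|\le b_1$ are in place.
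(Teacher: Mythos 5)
Your proof is correct and takes essentially the same route as the paper's: both verify the saddle-point inequalities with value $(p-b_1)/S_p$, handle the unmonitored components by translating $p+1\notin Z(b_1)$ into the bound $w_{e^*_{p+1}}<(p-b_1)/S_p$, and bound $L(V,\sigma^*_2)$ from below by noting that at most $b_1$ of the disjointly monitored components $e^*_1,\dots,e^*_p$ can be covered by any $V$ with $|V|\le b_1$. No gaps.
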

%and the optimal value of the game is $J^*=\frac{p-b_1}{S_p}$. 
\begin{proof}
Let $(\sigma^*_{1},\sigma^*_{2})$ be a strategy profile that satisfies~\eqref{eqn:def_strategy_analytical}--\eqref{eqn:att_strategy_analytical}. 
We know from Lemma~\ref{lemma:existance_eq_1} that at least one such a profile exists. 
We first derive an upper bound on the expected loss if P1 plays ${\sigma}^*_{1}$. 
Assume P2 targets component $e$ that belongs to $E_{v_j}$, $j\leq p$.
Then
\begin{equation}\label{eqn:loss_monitored}
\begin{aligned} 
L({\sigma}^*_{1},e) &=  \hspace{-2mm}\sum_{V \in \mathcal{A}_1}\hspace{-2mm}{\sigma}^*_{1}(V) l(V,e) =  
\hspace{-4mm}\sum_{V \in \mathcal{A}_{1},e \notin E_V}\hspace{-4mm} {\sigma}^*_{1}(V)w_e \\
&=w_e  \hspace{-6mm} \sum_{V \in \mathcal{A}_{1},v_j \notin V}  \hspace{-4mm} {\sigma}^*_{1}(V) 
\stackrel{\eqref{eqn:marginal}}{=}w_{e}(1-\rho_{{\sigma}^*_{1}}(v_j))\\
&\stackrel{\eqref{eqn:def_strategy_analytical}}{=}\frac{w_e}{w_{e^*_j}}\frac{p-b_1}{S_p} \stackrel{(*)}{\leq} \frac{p-b_1}{S_p},
\end{aligned}
\end{equation}
where (*) follows from the fact that $w_{e^*_j}$ is the largest criticality among the components from $E_{v_j}$. 
If $p$$=$$n$, we established that $\frac{p-b_1}{S_p}$ is an upper bound on P1's loss. 
If $p$$<$$n$, there exist nodes that are never selected for sensor positioning, so the components from $E_{v_{p+1}}$$,\ldots,$$E_{v_{n}}$ are never monitored. 
From~\eqref{eqn:index_and_set_x}, by targeting an unmonitored component $e_l$, P2 can achieve the payoff 
$w_{e_l}$.
Note that $w_{e_l}$ cannot be larger than $w_{e^*_{p+1}}$, because $w_{e^*_{p+1}}$ is the largest criticality for the monitoring set $E_{v_{p+1}}$, and $w_{e^*_{p+1}}\geq \ldots \geq w_{e^*_{n}}$ holds for the remaining sets $E_{v_{p+2}},\ldots, E_{v_{n}}$.  
Since $p+1$ does not belong to $Z(b_1)$, it follows from~\eqref{eqn:Zset}
\begin{align*}
w_{e^*_{p+1}} < \frac{p+1-b_1}{S_p+1/w_{e^*_{p+1}}} \Longleftrightarrow  w_{e^*_{p+1}}S_p< p-b_1.
\end{align*}
Thus, the loss associated with any unmonitored component $e_l$ is upper bounded by
$L({\sigma}^*_1,e_l) $$\leq $$w_{e^*_{p+1}}$$<$$ \frac{p-b_1}{S_p}.$
From the later observation and~\eqref{eqn:loss_monitored}, we conclude that the loss of P1 cannot be larger than $\frac{p-b_1}{S_p}$. 

Consider now $\sigma_2^*$.
For any $V$, such that $|V|\leq b_1$, we have
\begin{align*}
L(V,\sigma^*_2) &=  \sum_{e \in \mathcal{A}_2} \sigma^*_2(e) l(V,e)=\hspace{-5mm}  \sum_{i=1,e^*_i \notin E_V}^p \hspace{-5mm} \sigma^*_2 (e^*_i) w_{e^*_{i}}
\\
&\stackrel{\eqref{eqn:att_strategy_analytical}}{=} \sum_{i=1,e^*_i \notin E_V}^p \hspace{-2mm}\frac{1/w_{e^*_{i}} }{S_p} w_{e^*_{i}} 
=\sum_{i=1,e^*_i \notin E_V}^p\hspace{-2mm} \frac{1}{S_p} \stackrel{(**)}{\geq} \ \frac{p-b_1}{S_p},
\end{align*}
where (**) follows from the fact that every component $e^*_i$ belongs to a different monitoring set, so at most $b_1$ of them can be monitored by placing sensors at nodes $V$.
Thus, we can conclude that  $\frac{p-b_1}{S_p}$ is the value of the game, 
and $({\sigma}^*_{1},{\sigma}^*_{2})$ is a NE of $\Gamma$.
 \end{proof}

We now discuss P1's equilibrium strategy. 
From~\eqref{eqn:def_strategy_analytical}, we see that the probability of  P1 placing a sensor at node $v_j$ depends on the corresponding maximum criticality $w_{e^*_j}$: the higher $w_{e^*_j}$ is, the higher the probability of placing a sensor at $v_j$ is. 
This is intuitive because P1 monitors more critical components with higher probability. 
%Another interpretation that we use in the following subsection is as follows. 
%Since component $e^*_j$ is monitored only when sensor is placed at $v_j$, we have 
%\begin{equation}
%\mathbb{P}(e_j\text{ is not monitored})=1-\mathbb{P}(e_j\text{ is monitored})=1-\rho_{\sigma^*_1}(v_j )=\frac{J^* }{w^*_{e}}.
%\end{equation}
Additionally, note that P1 places sensors only on the first $p$ nodes.
If $p$$<$$n$, nodes $v_{p+1},\ldots,v_n$ are never allocated any sensor, and hence, the components from $E_{v_{p+1}},\ldots,E_{v_n}$ are never monitored. 
This is in contrast with the result from~\cite{2017arXiv170500349D}, where it was shown that P1 monitors every component with non-zero probability in any NE. 
Indeed, in our proof, we show that  the unmonitored components have criticality lower than the value of the game.  
Another interesting observation is that the set of nodes on which sensors are allocated also depends on the number of sensors P1 has at her disposal. 
Particularly, the more sensors P1 has, on the more nodes she allocates sensors, as shown in the following proposition. 

\begin{proposition}\label{lemma:set_Z}
Let $b_1$$ \in$$ \mathbb{N}$ (resp. $b_1'$$ \in$$ \mathbb{N}$) be given, and $p$ (resp. $p'$) be the largest element of $Z(b_1)$ (resp. $Z(b_1')$). 
If $b_1$$<$$ b'_1$$\leq n$, then $p$$\leq$$ p'$. 
\end{proposition}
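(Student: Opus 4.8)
The plan is to show that increasing the sensor budget cannot shrink the set of nodes receiving sensors, by arguing directly about the defining inequality of $Z(b_1)$. Recall that $p$ is the largest index $j$ satisfying $\frac{j-b_1}{\sum_{i=1}^{j}1/w_{e^*_i}} \le w_{e^*_j}$, which (after clearing the positive denominator $S_j := \sum_{i=1}^j 1/w_{e^*_i}$) is equivalent to $j - b_1 \le w_{e^*_j} S_j$, i.e. $j - w_{e^*_j} S_j \le b_1$. Define $f(j) := j - w_{e^*_j} S_j$; then $j \in Z(b_1)$ if and only if $f(j) \le b_1$, and $p$ is the largest $j$ with $f(j) \le b_1$. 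Stated this way, the monotonicity becomes almost transparent.

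First I would establish the clean characterization $Z(b_1) = \{\, j : f(j) \le b_1 \,\}$ and note $p = \max Z(b_1)$. Then, since $b_1 < b_1'$, every $j$ with $f(j) \le b_1$ also satisfies $f(j) \le b_1'$, so $Z(b_1) \subseteq Z(b_1')$. Taking maxima over a larger set can only increase (or keep) the largest element, hence $p = \max Z(b_1) \le \max Z(b_1') = p'$. The only thing that needs care is that both maxima are well defined, which follows from Lemma~\ref{lemma:existance_eq_1}'s supporting observation that $b_1 \in Z(b_1)$ (so $Z(b_1)$ is nonempty) and that $Z(b_1) \subseteq \{1,\dots,n\}$ is finite; the hypothesis $b_1' \le n$ similarly guarantees $Z(b_1')$ is a nonempty subset of $\{1,\dots,n\}$.

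The main obstacle I anticipate is not the inequality chain itself but making sure the reformulation is airtight: one must verify that multiplying through by $S_j > 0$ preserves the direction of the inequality (it does, since all $w_e > 0$), and that $b_1 \in Z(b_1)$ indeed holds so that $p \ge b_1$ is available and $Z(b_1) \ne \emptyset$. This last point is exactly the fact used at the start of the proof of Lemma~\ref{lemma:existance_eq_1}, so I would cite it rather than reprove it. With that in hand, the argument reduces to the set-inclusion monotonicity $\{j : f(j) \le b_1\} \subseteq \{j : f(j) \le b_1'\}$, which requires no computation beyond observing $b_1 \le b_1'$.

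In summary, the proof is short: reformulate membership in $Z$ as the threshold condition $f(j) \le b_1$ with $f$ independent of the budget, use $b_1 < b_1'$ to get $Z(b_1) \subseteq Z(b_1')$, and conclude $p \le p'$ by comparing the maxima of nested nonempty finite sets. No delicate estimate on $f$ (such as monotonicity of $f$ in $j$) is needed, because the budget enters only through the right-hand side of the defining inequality and the nodes are already fixed in the order $w_{e^*_1} \ge \dots \ge w_{e^*_n}$.
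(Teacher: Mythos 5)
Your proof is correct and rests on the same key observation as the paper's: the defining inequality of $Z(\cdot)$ only becomes easier to satisfy as the budget grows, so membership is monotone in $b_1$. The paper applies this monotonicity just to the single index $p$ (showing $p \in Z(b_1')$ via $w_{e^*_p} \geq \frac{p-b_1}{S_p} > \frac{p-b_1'}{S_p}$) while you phrase it as the set inclusion $Z(b_1) \subseteq Z(b_1')$, but this is the same argument in slightly more general clothing.
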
 

\begin{proof}
Note that $p$ (resp. $p'$) exists, since $b_1$$ \in$$ Z(b_1)$ (resp. $b_1'$$ \in$$ Z(b_1')$). We then have
$$
w_{e^*_p}\stackrel{\eqref{eqn:Zset}}{\geq} \frac{p-b_1}{S_p}\stackrel{(*)}{>}\frac{p-b'_1}{S_p}, 
$$
where (*) holds because $b_1$$<$$ b'_1$. 
Hence, $p$$ \in $$Z(b_1')$.
Since $p'$ is the largest element of $Z(b_1')$, $p'\geq p$ must hold. 
\end{proof}

%Finally, note that we have

We now discuss P2's equilibrium strategy.
Firstly, it follows from~\eqref{eqn:att_strategy_analytical} that P2 targets only the components from $E_p$. 
Thus, the unmonitored components are not necessarily targeted in equilibrium, again in contrast to~\cite{2017arXiv170500349D}.
Indeed, P2 on average gains more by attacking components from $E_p$, even though they may be monitored by P1 with a non-zero probability. 
Next, observe that the components from $E_p$ with higher criticality are targeted with lower probability.
The reason is that P1 monitors high criticality components with higher probability, which results in P2 targeting these components with a lower probability to remain undetected. 
Finally, note that the number of components P2 attacks is non-decreasing with the number of sensors P1 decides to deploy; this follows from Proposition~\ref{lemma:set_Z}.

\subsection{Overlapping Monitoring Sets and Single Sensor} \label{section:second_special_case}

To better understand if some of the conclusions from Section~\ref{section:first_special_case} can be extended to the case of overlapping monitoring sets, we discuss the case of single sensor ($b_1$$=$$1$). 
We introduce the following primal and dual linear programs that characterize the equilibrium for this game instance: 
  \begin{align*}
(\mathcal{P})\hspace{1mm}&\underset{x\geq 0 }{\text{maximize}}\hspace{1mm}\sum_{v \in \mathcal{V}} x_{v}\hspace{2mm}
\text{subject to } \sum_{\substack{v \in \mathcal{V}\\ e \notin E_{v}}} x_{v} \leq \frac{1}{w_e}, \forall e\in\mathcal{E}, \\
(\mathcal{D})\hspace{1mm}&\underset{y\geq 0 }{\text{minimize}}\hspace{1mm}\sum_{e \in \mathcal{E}} \frac{y_{e}}{w_{e}}\hspace{2mm}
\text{subject to } \sum_{ \substack{e \in \mathcal{E} \\ e \notin E_{v}}} y_{e}\hspace{-0.5mm} \geq1, \forall v \in\mathcal{V}.
 \end{align*}
These problems are reformulations of $\text{LP}_1$ and $\text{LP}_2$~\cite[Section 2]{basar1999dynamic}.
Under the reasonable assumption that P1 cannot monitor all the components using a single sensor, $(\mathcal{P})$ and $(\mathcal{D})$ are bounded. 
Moreover, thanks to strong duality, their optimal values coincide.  
Let $x^*$ be a solution of $(\mathcal{P})$, $y^*$ be a solution of $(\mathcal{D})$, and $J^*$ be the optimal value of these programs. 
Then the following strategy profile
%Let $\bar{\sigma}^*_1$ and $\bar{\sigma}^*_2$ be defined as
 \begin{align} \label{eqn:lin_prog}
\bar{\sigma}^*_1(v)= \frac{x^*_{v}}{ J^*}, \hspace{5mm}\bar{\sigma}^*_2(e)=  \frac{y^*_{e}}{J^*w_{e}}, 
  \end{align}
is a NE of $\Gamma$. 
\begin{proposition}\label{theorem:solution_special_case_3}
Let $b_1$$=$$1$, and assume that $E_v$$ \neq $$\mathcal{E}$ for any $v \in \mathcal{V}$. 
The strategy profile~\eqref{eqn:lin_prog} is a NE of~$\Gamma$, and the value of the game is $L(\bar{\sigma}^*_1,\bar{\sigma}^*_2)=\frac{1}{J^*}$. 
\end{proposition}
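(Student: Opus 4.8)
The plan is to exploit the zero-sum structure: it suffices to verify the two saddle-point inequalities against \emph{pure} deviations, namely that $L(\bar{\sigma}^*_1,e)\leq 1/J^*$ for every $e\in\mathcal{E}$ and that $L(V,\bar{\sigma}^*_2)\geq 1/J^*$ for every $V\in\mathcal{A}_1$. Once these hold, linearity of $L$ in each argument immediately upgrades them to all mixed strategies and pins down the value. Before that, I first check that \eqref{eqn:lin_prog} actually defines mixed strategies. The hypothesis $E_v\neq\mathcal{E}$ guarantees that each variable $x_v$ appears in at least one primal constraint of $(\mathcal{P})$, so $(\mathcal{P})$ is bounded, and a single-node feasible point shows $J^*>0$; hence the divisions in \eqref{eqn:lin_prog} are well defined and $\bar{\sigma}^*_1,\bar{\sigma}^*_2\geq 0$. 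Normalization of $\bar{\sigma}^*_1$ follows from $\sum_{v}x^*_v=J^*$ (the primal optimal value), and normalization of $\bar{\sigma}^*_2$ from $\sum_e y^*_e/w_e=J^*$ together with strong duality (the optimal values of $(\mathcal{P})$ and $(\mathcal{D})$ coincide).

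For the attacker's side, I would compute, for a pure target $e$, that only the placements missing $e$ contribute a loss of $w_e$:
\begin{align*}
L(\bar{\sigma}^*_1,e)=w_e\!\!\sum_{v:\,e\notin E_v}\!\!\bar{\sigma}^*_1(v)=\frac{w_e}{J^*}\!\!\sum_{v:\,e\notin E_v}\!\! x^*_v\leq\frac{1}{J^*},
\end{align*}
where the final step is exactly the primal constraint of $(\mathcal{P})$ indexed by $e$. Symmetrically, for a singleton $V=\{v\}$,
\begin{align*}
L(v,\bar{\sigma}^*_2)=\!\!\sum_{e:\,e\notin E_v}\!\!\bar{\sigma}^*_2(e)\,w_e=\frac{1}{J^*}\!\!\sum_{e:\,e\notin E_v}\!\! y^*_e\geq\frac{1}{J^*},
\end{align*}
invoking the dual constraint of $(\mathcal{D})$ indexed by $v$. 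The empty placement $V=\emptyset$ must also be treated, but it only favors the attacker: $L(\emptyset,\bar{\sigma}^*_2)=\tfrac{1}{J^*}\sum_e y^*_e\geq\tfrac{1}{J^*}$, since the $y^*_e$ are nonnegative and already sum to at least $1$ over any set $\{e:e\notin E_v\}$.

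Finally, I would assemble the pieces. Averaging the first bound over any $\sigma_2$ gives $L(\bar{\sigma}^*_1,\sigma_2)\leq 1/J^*$, and averaging the second over any $\sigma_1\in\Delta_1$ gives $L(\sigma_1,\bar{\sigma}^*_2)\geq 1/J^*$; evaluating both at the profile \eqref{eqn:lin_prog} forces $L(\bar{\sigma}^*_1,\bar{\sigma}^*_2)=1/J^*$. Substituting this value back yields $L(\bar{\sigma}^*_1,\sigma_2)\leq L(\bar{\sigma}^*_1,\bar{\sigma}^*_2)\leq L(\sigma_1,\bar{\sigma}^*_2)$ for all $\sigma_1,\sigma_2$, which is precisely the NE condition, with value $1/J^*$.

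I expect the only genuine subtlety to lie in the preliminary bookkeeping rather than in the saddle-point computation: confirming $J^*>0$ and the boundedness of $(\mathcal{P})$ and $(\mathcal{D})$ (both of which rest on the hypothesis $E_v\neq\mathcal{E}$), and remembering that $\mathcal{A}_1$ contains the empty placement in addition to the singletons, so that the lower bound $L(V,\bar{\sigma}^*_2)\geq 1/J^*$ is checked over \emph{all} of P1's pure actions. The two core inequalities reduce directly to primal and dual feasibility, so no complementary slackness is needed beyond strong duality for the normalization step.
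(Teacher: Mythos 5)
Your proof is correct and follows essentially the same route as the paper's: both reduce the saddle-point inequalities to pure deviations and observe that the upper bound $L(\bar{\sigma}^*_1,e)\leq \frac{1}{J^*}$ is exactly primal feasibility of $x^*$ while the lower bound $L(v,\bar{\sigma}^*_2)\geq \frac{1}{J^*}$ is exactly dual feasibility of $y^*$. Your version is slightly more careful on bookkeeping---you verify $J^*>0$, the normalizations via strong duality, and the empty placement $V=\emptyset$, which the paper's proof glosses over---but these are minor completions of the same argument rather than a different approach.
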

\begin{proof}
If $E_v$$ \neq $$\mathcal{E}$ for any $v $$\in $$\mathcal{V}$, then $(\mathcal{D})$ is feasible. 
For example, $y_{e_1}$$=$$\ldots$$=$$y_{e_m}$$=$$1$ represents a feasible solution of $(\mathcal{D})$. 
Thus, $J^*$ is bounded and the strategy profile~\eqref{eqn:lin_prog} is well-defined. 
Now, for any $e $$\in$$ \mathcal{E}$, we have 
  \begin{align*}
  L (\bar{\sigma}^*_1,e)  = \sum_{v \in \mathcal{V}} \bar{\sigma}^*_1(v) l(v,e)
  \stackrel{\eqref{eqn:index_and_set_x},\eqref{eqn:lin_prog}}{=} 
  %\sum_{v_i \in \mathcal{V}} \frac{p^*_{v_i}}{J^*} l(v_i,e).
 \frac{ w_e}{J^*}  \sum_{v \in \mathcal{V}, e \notin E_{v}}  x^*_{v}.
    \end{align*}
    Note that
$ w_e\sum_{v \in \mathcal{V}, e \notin E_{v}} x^*_{v}$$\leq $$1$, since $x^*$ is a solution of $(\mathcal{P})$. 
    Thus,   $L(\bar{\sigma}^*_1,e)$$\leq$$ \frac{1}{J^*}$ for any $e $$\in $$\mathcal{E}$.
        Similarly, for any $v$$ \in$$ \mathcal{V}$
  \begin{align*}
  L (v,\bar{\sigma}^*_2)  = \sum_{e \in \mathcal{E}} \bar{\sigma}^*_2(e) l(v,e)  \stackrel{\eqref{eqn:index_and_set_x},\eqref{eqn:lin_prog}}{=} 
  % \sum_{ e_i \in \mathcal{E}, e_i \notin E_{v} }  \frac{q^*_{e_i}}{w_{e_i} J^*} w_{e_i}\\
 \sum_{ e\in \mathcal{E}, e \notin E_{v} } \frac{y^*_{e} }{  J^*} ,
    \end{align*}
where $\sum_{e\in \mathcal{E}, e \notin E_{v}} $$y^*_{e}  $$\geq  $$1$  since $y^*$ is a solution of $(\mathcal{D}$). 
  Thus,   $L (v,\bar{\sigma}^*_2)\geq \frac{1}{J^*}$ for any $v \in \mathcal{V}$.
Hence, $\frac{1}{J^*}$ is the value of the game, and $(\bar{\sigma}^*_1,\bar{\sigma}^*_2)$ is a NE. 
\end{proof}

To understand P1's equilibrium strategy, note that  $x^*_{v}$ can be viewed as a scaled probability of inspecting $v$. 
By inserting $x^*$ into the constraints of $(\mathcal{P})$, and  dividing them by $J^*$, we obtain 
$  \sum_{v \in \mathcal{V},e \notin E_{v}} \frac{x^*_{v}}{J^*} $$\leq$$ \frac{1}{ w_e}L(\bar{\sigma}^*_1,\bar{\sigma}^*_2),$$\forall $$e\in\mathcal{E}.$
One can now verify that the left side of this inequality is the probability of \textit{not} monitoring $e$. 
Thus, if $w_e $$\leq$$L(\bar{\sigma}^*_1,\bar{\sigma}^*_2)$, then P1 can leave $e$ unmonitored. 
Otherwise, P1 monitors $e$ with non--zero probability. 
Additionally, the higher $w_e$ enforces the lower probability that $e$ is left unmonitored. 
Note that all these observations are similar to the ones we made for the case discussed in Section~\ref{section:first_special_case}.
%However, in this case, it is not trivial to express the probabilities modeled by $p$ analytically, so we characterize them based on $\mathcal{P}$. 

In P2's equilibrium strategy, $y^*_{e}$ can be interpreted as the scaled gain that P2 achieves by targeting $e$. 
Namely, by inserting $y^*$ into the constraints of $(\mathcal{D})$, and dividing them by $J^*$, we obtain 
$
\sum_{e\in \mathcal{E},e \notin E_{v}}$$ \frac{y^*_{e}}{J^*} $$\geq $$L(\bar{\sigma}^*_1,\bar{\sigma}^*_2),$$\forall $$v$$\in$$\mathcal{V}.
$
The left hand side of the inequality represents P2's payoff once P1 monitors $v$.
Thus, the constraints of $(\mathcal{D})$ guarantee that P2's payoff is at least $\frac{1}{J^*}$.
Next, P2's objective is to minimize $\sum_{e \in \mathcal{E}}$$ \frac{y_{e}}{w_{e}}$, so she has more incentive to increase $y_{e}$ for which the corresponding criticality $w_{e}$ is higher. 
%Thus, if a group of components appears in the same constraints of~$(\mathcal{D})$, P2 would prioritize targeting the component with the highest criticality from this group. 
This is consistent with the attack strategy~\eqref{eqn:att_strategy_analytical}, where P2 targeted the components from $E_p$.
Additionally, assume that the components $e_1$ and $e_2$ are associated with the same value of the scaled gain, that is,  $y_{e_1}$$=$$y_{e_2}$. 
It then follows from~\eqref{eqn:lin_prog} that the component with higher criticality has the lower probability of being targeted by P2, which is another similarity with~\eqref{eqn:att_strategy_analytical}.
%This is consistent with~\eqref{eqn:att_strategy_analytical}, where P2's strategy is to prioritize targeting the components from $E_p$ with high criticality. 
%We can also see that $y^*_{e}$ appears in the constraint $v$ if $e$ is not monitored from $v$. 
%Thus, if $e$ is monitored by a small number of nodes, then it might be a good choice for P2 to increase $y^*_{e}$ to satisfy more constraints at once.

Although the discussion above provides us with some game-theoretic intuition, we are unable to say more about a NE~\eqref{eqn:lin_prog} since $x^*$ and $y^*$ are unknown. 
In the next section, we introduce an $\epsilon$-NE that can give us more insights about equilibrium strategies in the general case of the game.

\section{Approximate Equilibrium Strategies} \label{section:approximate_strategies}

In this section, we show that the mixed strategies developed in~\cite{2017arXiv170500349D} can be used to obtain an $\epsilon$-NE for $\Gamma$, and discuss possible ways to improve the monitoring strategy from this $\epsilon$-NE.
%If the component weights are homogeneous, $\epsilon$ can be small, and these strategies represent exact or relatively good approximation of a NE. 
%Yet, if the difference between the minimum and maximum component weight is large, $\epsilon$ becomes large. 
%We also consider a game instance where component weights are binary, and show how to improve $\epsilon$ by a simple extension of the strategies from~\cite{2017arXiv170500349D}.   
We begin by introducing necessary preliminaries. 

\subsection{Preliminaries}

We first define set packings and set covers, which are two essential notions that we use subsequently.

\begin{definition}
We say that $E \in 2^\mathcal{E}$ is: (1)~A set packing, if for all $v \in \mathcal{V}$, $|E_{v} \cap E|\leq 1$; (2)~A \textit{maximum} set packing, if $|E'|\leq |E|$ holds for every other set packing $E'$. 
\end{definition}

\begin{definition}
We say that $V \in 2^\mathcal{V}$ is:
(1)~A set cover, if $E_V=\mathcal{E}$; 
(2)~A \textit{minimum} set cover if $|V|\leq |V'|$ holds for every other set cover $V'$.
\end{definition}

Set packings are of interest to P2. 
Namely, each of the components from a set packing needs to be monitored by a separate sensor.  
Thus, by randomizing the attack over a set packing, P2 can make it more challenging for P1 to detect the attack.
Similarly, set covers are of interest for P1.
In fact, if P1 is able to form a set cover using $b_1$ sensors, she can monitor all the components.
In that case, $\Gamma$ is easy to solve in pure strategies, as shown in the following proposition.

\begin{proposition} \label{thm:purestrategies}
A pure strategy profile ($V^*,e^*$) is a  NE of $\Gamma$ if and only if $V^*$ is a set cover and $|V^*|\leq b_1$. 
\end{proposition}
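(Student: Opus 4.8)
The plan is to prove the biconditional directly, establishing each direction separately. Recall that in a zero-sum game, a pure strategy profile $(V^*, e^*)$ is a NE if and only if it is a saddle point of the loss $l$, meaning $V^*$ minimizes $l(\cdot, e^*)$ over $\mathcal{A}_1$ while $e^*$ maximizes $l(V^*, \cdot)$ over $\mathcal{A}_2$. The key observation driving the whole argument is that since every $w_e > 0$, the loss $l(V,e)$ is $w_e > 0$ when $e \notin E_V$ and $0$ when $e \in E_V$, so the minimum possible value of $l$ is $0$, achieved exactly when the attacked component is monitored.

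For the ''if'' direction, suppose $V^*$ is a set cover with $|V^*| \leq b_1$, so $V^* \in \mathcal{A}_1$ and $E_{V^*} = \mathcal{E}$. First I would argue that P1 cannot be exploited: for \emph{any} attacked component $e$, we have $e \in E_{V^*}$, hence $l(V^*, e) = 0$. Therefore $L(V^*, \sigma_2) = 0$ for every $\sigma_2 \in \Delta_2$, which is the global minimum of the expected loss since $l \geq 0$. Thus P1 has no incentive to deviate regardless of P2's choice, and any $e^*$ (in particular one completing the profile) leaves P2 indifferent, as every component yields payoff $0$. Hence $(V^*, e^*)$ is a saddle point and the value of the game is $0$.

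For the ''only if'' direction, I would argue by contraposition: suppose $V^*$ is not a set cover with $|V^*| \leq b_1$. If $|V^*| > b_1$ then $V^* \notin \mathcal{A}_1$ and the profile is not even feasible, so it cannot be a NE. Otherwise $V^* \in \mathcal{A}_1$ but $E_{V^*} \neq \mathcal{E}$, so there exists an unmonitored component $\tilde{e} \notin E_{V^*}$. Then P2 can deviate to attack $\tilde{e}$, obtaining $l(V^*, \tilde{e}) = w_{\tilde{e}} > 0 \geq l(V^*, e^*)$, so P2 strictly prefers this deviation whenever the profile's payoff is not already $w_{\tilde e}$; more carefully, the existence of a component with strictly positive loss means the game value must be positive, and I would confirm no pure profile with $E_{V^*} \neq \mathcal{E}$ can be a saddle point because P1 could then improve (or P2 could). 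The cleanest phrasing is: if $E_{V^*} \neq \mathcal{E}$, P2 attacking any unmonitored component yields strictly positive loss, so $(V^*,e^*)$ fails the saddle-point inequality for P2's side unless $e^*$ is already such a component—but then P1, who can only achieve value at least the game value, would still face positive loss, contradicting the fact (from the ''if'' direction) that a set-cover strategy guarantees value $0$. I would invoke that a NE value is unique in zero-sum games to make this contradiction rigorous.

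The main obstacle is handling the ''only if'' direction cleanly, because the profile $(V^*, e^*)$ being a NE constrains \emph{both} players simultaneously, and one must be careful that a non-cover $V^*$ genuinely admits a profitable deviation for P2 (or equivalently that P1 is not already at value $0$). The decisive fact that resolves this is the uniqueness of the equilibrium value in zero-sum games together with the ''if'' direction showing that value $0$ is attainable by a covering strategy; since value $0$ is the global minimum of $l$, any equilibrium must have value $0$, and this forces the equilibrium monitoring action to cover every attacked component in the support—which, for a pure-strategy equilibrium to be robust against \emph{all} attacker deviations, requires $E_{V^*} = \mathcal{E}$.
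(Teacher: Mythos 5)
Your ``if'' direction is correct and matches the paper's: a feasible set cover makes $l(V^*,e)=0$ for every $e$, which is the global minimum of a nonnegative loss, so neither player can improve. The problem is in your ``only if'' direction, specifically in the case where $V^*\in\mathcal{A}_1$ is not a cover and the attacked component $e^*$ is itself unmonitored, so $l(V^*,e^*)=w_{e^*}>0$. There you rule out the profile by invoking uniqueness of the zero-sum value together with the claim, imported from the ``if'' direction, that ``a set-cover strategy guarantees value $0$.'' That claim presupposes that a set cover of cardinality at most $b_1$ \emph{exists}. The proposition makes no such assumption, and indeed the paper's whole point in stating it is the opposite regime: immediately afterwards the paper assumes $b_1<|V|$ for every set cover $V$, and uses the proposition to conclude that \emph{no} pure NE exists in that case. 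When no feasible cover exists, the game value is strictly positive, your asserted contradiction evaporates, and your argument says nothing about whether such a profile $(V^*,e^*)$ with positive loss could be a pure NE.

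The gap is easy to close, and the paper does it with a direct deviation by P1 that needs no set cover at all: by the standing assumption in Section~\ref{section:security_game} that every component can be monitored from at least one node, there exists $v\in\mathcal{V}$ with $e^*\in E_v$, and $\{v\}\in\mathcal{A}_1$; deviating to it drops P1's loss from $w_{e^*}>0$ to $0$, so the profile violates P1's equilibrium condition. Combined with your (correct) handling of the other case --- $e^*$ monitored, hence P2 profitably deviates to any uncovered component --- this yields the ``only if'' direction unconditionally. As a side remark, your observation that $|V^*|>b_1$ makes the profile infeasible is fine but essentially vacuous, since a pure strategy profile by definition has $V^*\in\mathcal{A}_1$; the paper accordingly omits it.
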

\begin{proof}
($\Rightarrow$) The proof is by contradiction. 
Let $(V^*,e^*)$ be a NE in which $V^*$ is not a set cover. 
Assume first that $l(V^*,e^*)$$=$$0$.
Since $V^*$ is not a set cover, P2 can attack $e$$ \notin$$ E_{V^*}$. 
Then $l(V^*,e)$$=$$w_e$$>$$l(V^*,e^*)$$=$$0$, so  $(V^*,e^*)$ cannot be a NE. 
The remaining option is $l(V^*,e^*)$$>0$.
In this case, P1 can select to play $V$, $e^* $$\in $$E_{V}$, and decrease the loss to 0. 
Thus, $(V^*,e^*)$ cannot be a NE in this case either. 

($\Leftarrow$) If $|V^*|$$\leq$$ b_1$, then $V^*$$ \in$$ \mathcal{A}_1$.
Furthermore, if $V^*$ is a set cover, then $l(V^*,e)$$=$$0$ for all $e$$ \in$$ \mathcal{A}_2$.
Thus, P1 cannot decrease the loss any further, and P2 cannot increase it, which implies $(V^*,e^*) $ is a NE.
\end{proof}

A more interesting and practically relevant situation is one in which P1 is not able to monitor all the components simultaneously due to limited sensing budget. 
Therefore, we henceforth assume that P1 cannot form a set cover using $b_1$ sensors; i.e. $b_1$$ < $$|V|$ hold for any set cover $V$$ \in$$ 2^\mathcal{V}$.

\subsection{Set Cover/Set Packing Based Strategies}

%Let  $V^*$ and $E^*$ be respectively a minimum set cover and a maximum set packing, and let $$w_{\min}= \min_{e \in \mathcal{E}}w_e,\hspace{2mm} w_{\max} =\max_{e \in \mathcal{E}} w_e,\hspace{2mm} \Delta_w = w_{\max}-w_{\min}.$$ 
%In what  follows, we primevally interested in the instance $\Gamma$ in which the minimum set cover and maximum set packing have equal cardinalities ($|V^*|=|E^*|=n^*$) and the component weights have equal values ($\Delta_w=0$). 

We now introduce the mixed strategies constructed using the notion of  minimum set cover and maximum set packing. 
Particularly, let  $V^*$ (resp. $E^*$) be a minimum set cover (resp. a maximum set packing), and  
$n^*$$\coloneqq$$ |V^*|$ (resp. $m^*$$\coloneqq $$|E^*|$). 
Following~\cite{2017arXiv170500349D}, we consider the mixed strategies $\sigma^\epsilon_{1}$ and $\sigma^\epsilon_{2}$ characterized by
\begin{align}
\label{eqn:def_strategy_covers}
\rho_{\sigma^\epsilon_{1}}(v)&=
\begin{cases} 
\frac{b_1}{n^*},\hspace{2mm}v\in V^*,  \\
\hspace{2.2mm}0,\hspace{2mm}v\notin V^*,
\end{cases}\\
\label{eqn:att_strategy_packings}
\sigma^\epsilon_{2}(e)&=
\begin{cases}\frac{1}{m^*},\hspace{2mm}e\in E^*,  \\
\hspace{2.5mm} 0,\hspace{2.2mm}e\notin E^*. 
\end{cases}
\end{align}
In other words, P1 places sensors only on nodes from $V^*$ with probability $\frac{b_1}{n^*}$. 
Since $V^*$ is a set cover, it follows that every component is monitored with probability at least $\frac{b_1}{n^*}$.
The strategy of P2 is to attack the components from $E^*$ with probability $\frac{1}{m^*}$. 
The proof of existence of a strategy profile $(\sigma^\epsilon_{1},\sigma^\epsilon_{2})$ satisfying~\eqref{eqn:def_strategy_covers}--\eqref{eqn:att_strategy_packings} is by construction, and can be found in \cite[Lemma~1]{2017arXiv170500349D}. 
Let 
$w_{\min}$$\coloneqq$$\min_{e \in \mathcal{E}}$$w_e$, $w_{\max}$$ \coloneqq$$\max_{e \in \mathcal{E}}$$ w_e$, and $\Delta_w $$\coloneqq$$ w_{\max}$$-$$w_{\min}.$
The following theorem establishes that $(\sigma^\epsilon_{1},\sigma^\epsilon_{2})$ is an $\epsilon$--NE, and gives the worst case values for~$\epsilon$ and P1's loss.

\begin{theorem} \label{thm:mix_strategies_diff_indexes} 
Any strategy profile that satisfies~\eqref{eqn:def_strategy_covers}--\eqref{eqn:att_strategy_packings} is an $\epsilon$-NE of $\Gamma$,  where
$$\epsilon=\underbrace{b_1 w_{\min} \frac{n^*-\max\{b_1,m^*\}}{n^*\max\{b_1,m^*\}}}_{=\epsilon_1}+\underbrace{\Delta_w\frac{n^*-b_1}{n^*}}_{=\epsilon_2}.$$
Furthermore, for any $\sigma_2 $$\in $$\Delta_2$, we have 
\begin{equation} \label{eqn:worst_case_loss_set_cover}
L(\sigma^{\epsilon}_1,\sigma_2) \leq  w_{\max} \frac{n^*-b_1}{n^*}.
\end{equation}
\end{theorem}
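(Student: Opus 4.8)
The plan is to prove the two assertions in turn, establishing the loss bound \eqref{eqn:worst_case_loss_set_cover} first and then recycling it as the main ingredient in the $\epsilon$-NE argument. Since $L(\sigma^\epsilon_1,\sigma_2)$ is linear (hence convex) in $\sigma_2$, its maximum over $\Delta_2$ is attained at a pure attack, so it suffices to bound $L(\sigma^\epsilon_1,e)$ for each $e\in\mathcal{E}$. Repeating the computation in \eqref{eqn:loss_monitored} gives $L(\sigma^\epsilon_1,e)=w_e(1-\rho_e)$, where $\rho_e$ denotes the probability that $e$ is monitored under $\sigma^\epsilon_1$. Because $V^*$ is a set cover, some node $v_0\in V^*$ satisfies $e\in E_{v_0}$, and by \eqref{eqn:def_strategy_covers} that node carries marginal probability $b_1/n^*$; hence $\rho_e\geq b_1/n^*$ and $L(\sigma^\epsilon_1,e)\leq w_e\,(n^*-b_1)/n^*\leq w_{\max}(n^*-b_1)/n^*$. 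Taking the convex combination over an arbitrary $\sigma_2$ yields \eqref{eqn:worst_case_loss_set_cover}.

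For the $\epsilon$-NE claim I would first reduce it to a single inequality. Write $U:=\max_{\sigma_2}L(\sigma^\epsilon_1,\sigma_2)$ and $D:=\min_{\sigma_1}L(\sigma_1,\sigma^\epsilon_2)$. One always has $D\leq L(\sigma^\epsilon_1,\sigma^\epsilon_2)\leq U$, so for any $\sigma_2$ we get $L(\sigma^\epsilon_1,\sigma_2)\leq U\leq D+\epsilon\leq L(\sigma^\epsilon_1,\sigma^\epsilon_2)+\epsilon$, and for any $\sigma_1$ we get $L(\sigma^\epsilon_1,\sigma^\epsilon_2)\leq U\leq D+\epsilon\leq L(\sigma_1,\sigma^\epsilon_2)+\epsilon$. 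Thus both defining inequalities of an $\epsilon$-NE follow at once, provided I can show the single bound $U-D\leq\epsilon$.

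The upper bound on $U$ is immediate from the first part: $U\leq w_{\max}(n^*-b_1)/n^*$, and splitting $w_{\max}=w_{\min}+\Delta_w$ gives $U\leq w_{\min}(n^*-b_1)/n^*+\epsilon_2$, so the heterogeneity of the criticalities is absorbed exactly into the term $\epsilon_2$. The lower bound on $D$ is the crux. By linearity the minimizing $\sigma_1$ is a pure $V$ with $|V|\leq b_1$, and $L(V,\sigma^\epsilon_2)=\frac{1}{m^*}\sum_{e\in E^*,\,e\notin E_V}w_e$. Here the set packing structure enters: since $|E_v\cap E^*|\leq 1$ for every $v\in\mathcal{V}$, each of the $b_1$ sensors monitors at most one element of $E^*$, so at least $\max\{0,m^*-b_1\}$ components of $E^*$ stay unmonitored, each contributing criticality at least $w_{\min}$. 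This yields $D\geq w_{\min}\max\{0,m^*-b_1\}/m^*$, and a short computation (treating the two regimes $b_1\geq m^*$ and $b_1<m^*$ uniformly through $\max\{b_1,m^*\}$) shows this right-hand side equals $w_{\min}(n^*-b_1)/n^*-\epsilon_1$. Combining the two bounds gives $U-D\leq\epsilon_1+\epsilon_2=\epsilon$, completing the argument.

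I expect the \emph{main obstacle} to be precisely the lower bound on $D$: recognizing that the packing property $|E_v\cap E^*|\leq 1$ caps at $b_1$ the number of packing elements any budget-feasible placement can cover, and then checking that the resulting quantity $w_{\min}\max\{0,m^*-b_1\}/m^*$ reconciles algebraically with $w_{\min}(n^*-b_1)/n^*-\epsilon_1$. In particular, it is the factor $\max\{b_1,m^*\}$ inside $\epsilon_1$ that makes these two expressions coincide in both the $b_1\geq m^*$ regime (where $D=0$) and the $b_1<m^*$ regime; everything else is routine linearity and the cover/packing definitions already recorded in the excerpt.
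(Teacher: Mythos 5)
Your proposal is correct and follows essentially the same route as the paper's own proof: the set-cover marginal $b_1/n^*$ gives the upper bound $\bar{L}=w_{\max}(n^*-b_1)/n^*$ on P1's loss, the packing property $|E_v\cap E^*|\leq 1$ gives the lower bound $\underline{L}=w_{\min}\max\{0,m^*-b_1\}/m^*$ on P2's payoff, and the $\epsilon$-NE property follows from $\bar{L}-\underline{L}=\epsilon_1+\epsilon_2$. Your explicit $U$/$D$ reduction and the identity $\underline{L}=w_{\min}(n^*-b_1)/n^*-\epsilon_1$ are just a cleaner bookkeeping of the same algebra the paper performs when it expands $\epsilon=\bar{L}-\underline{L}$.
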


\begin{proof}
We first derive an upper bound on P1's expected loss if she plays ${\sigma}^{\epsilon}_{1}$. 
Let $e$ be an arbitrary component, and $ \mathcal{A}_{1}' $$=$$ \{V $$\in$$ \mathcal{A}_1$$|$$ l(V,e)$$=$$w_e$$ \}$ be the set of sensor placements in which $e$ is not monitored. 
The expected loss $L({\sigma}^{\epsilon}_{1},e)$ is then
\begin{align*}
L({\sigma}^{\epsilon}_{1},e)&=  \sum_{V \in \mathcal{A}_1}\sigma^{\epsilon}_1(V) l(V,e)=w_e  \sum_{V \in \mathcal{A}_{1}'}  \sigma^{\epsilon}_1(V).
\end{align*}
Note that $\sum_{V \in \mathcal{A}_{1}'}\sigma^{\epsilon}_1(V)$ represents the probability that  $e$ is not monitored. 
This probability is at most $1-\frac{b_1}{n^*}$, since P1 inspects every element of a set cover with probability $\frac{b_1}{n^*}$.
Moreover, $w_{e}\leq w_{\max}$.
It then follows that
\begin{equation} \label{eqn:lb_equal_prob}
\begin{aligned}
L(\sigma^{\epsilon}_1,e) \leq  w_{\max} \frac{n^*-b_1}{n^*}=\bar{L}, 
\end{aligned}
\end{equation}
which confirms~\eqref{eqn:worst_case_loss_set_cover}.
We now derive a lower bound on the expected payoff of P2 if she plays $\sigma^{\epsilon}_2$.  
Let $V$ be an arbitrary element of $\mathcal{A}_{1}$,
and $ E' $$= $$ \{e  $$\in $$ E^*| l(V,e) $$= $$w_e\}$ be the set of components from $E^*$ that are not monitored from $V$. 
Then
\begin{align*}
 L(V,\sigma^{\epsilon}_2) &=  \sum_{e \in \mathcal{E}}\sigma^{\epsilon}_2(e) l(V,e) \stackrel{\eqref{eqn:att_strategy_packings}}{=} \frac{1}{m^*} \sum_{e \in E' } w_e \\ &  \geq 
\frac{1}{m^*}\sum_{e \in E' } w_{\min} =\frac{ |E' |}{m^*} w_{\min}.
\end{align*}
Since $E^*$ is a maximum set packing and $|V|\leq b_1$, at most $b_1$ components can be monitored by positioning $V$.
Therefore, $|E'|\geq \max\{0,m^* -b_1\}$,
and we conclude
\begin{equation} \label{eqn:ub_equal_prob}
L(V,\sigma^{\epsilon}_2)
\geq  w_{\min}\frac{ \max{ \{0,m^* -b_1\} }}{m^*}=\underline{L}.
\end{equation}
From~\eqref{eqn:lb_equal_prob} and~\eqref{eqn:ub_equal_prob}, it follows that 
$\underline{L} \leq L(\sigma^{\epsilon}_1,\sigma^{\epsilon}_2)\leq \bar{L}$. 
Thus, $(\sigma^{\epsilon}_1,\sigma^{\epsilon}_2)$ is an $\epsilon$-NE, where 
\begin{align*}
\epsilon&= \bar{L}-\underline{L}=w_{\max} \frac{n^*-b_1}{n^*} - w_{\min}\frac{ \max{ \{0,m^* -b_1\} }}{m^*} \\
&= (w_{\min}+\Delta_w) \frac{n^*-b_1}{n^*} - w_{\min}\frac{ \max{ \{0,m^* -b_1\} }}{m^*}\\
&= b_1 w_{\min} \frac{n^*-\max\{b_1,m^*\}}{n^*\max\{b_1,m^*\}}+\Delta_w\frac{n^*-b_1}{n^*}. \text{\hspace{13mm}}
\end{align*}
This concludes the proof. \end{proof}

From Theorem~\ref{thm:mix_strategies_diff_indexes}, we can draw the following conclusions. 
If all the components have equal criticality level, then $\Delta_w$$=$$0$ and $\epsilon_2$$=$$0$. 
In that case, $\epsilon_1$$=$$0$ if $n^*$$=$$m^*$, and $(\sigma^\epsilon_{1},\sigma^\epsilon_{2})$ is an exact NE. 
Although $n^*$$=$$m^*$ may look as a restrictive condition, it turns out that $n^*$ and $m^*$ are often equal or close to each other in practice~\cite{2017arXiv170500349D}.
%Thus, $(\sigma^\epsilon_{1},\sigma^\epsilon_{2})$ is a NE or a good approximate NE when $\Delta_w=0$. 
Also note that the strategy profile constructed using~\eqref{eqn:def_strategy_covers}--\eqref{eqn:att_strategy_packings} differs from the equilibrium profile developed in Section~\ref{section:first_special_case} in two aspects: 
(i)~Since $V^*$ is a set cover, every component is monitored with non-zero probability;
(ii)~The set of nodes where sensors are placed (resp. the set of  attacked components) does not change with $b_1$, that is, it is always $V^*$ (resp. $E^*$).

However, if $\Delta_w$ is large, $\epsilon$ can be large even if $n^*$$=$$m^*$. 
%The reason why $\sigma^\epsilon_{1}$ and $\sigma^\epsilon_{2}$ may fail in this case is discussed in Example~\ref{example:example_2}.
The strategies $\sigma^\epsilon_{1}$ and $\sigma^\epsilon_{2}$ may fail in this case because they assume every component to be equally critical. 
For instance, consider the case from Fig.~\ref{figure:example_2}.
%Example~\ref{example:example_2}.
%\begin{example} \label{example:example_2}
%Consider the game shown in Fig.~\ref{figure:example_2}.
%We have $\mathcal{V}$$=$$\{v_1,v_2\}$, $\mathcal{E}$$=$$\{e_1,e_2,e_3\}$, $E_{v_1}$$=$$\{e_1,e_2\}$, $E_{v_2}$$=$$\{e_2,e_3\}$. 
We have $V^*$$=$$\{v_1,v_2\}$, $E^*$$=$$\{e_1,e_3\}$, the criticality of blue (resp. red) components is $w_{\min}$ (resp. $w_{\max}$), and $b_1$$=$$1$. 
%By definition of a minimum set cover, for every $v$$ \in$$ V^*$, there exists a component $e$ that can be monitored from $v$, but not from $V^*$$\setminus $$v$.
From Fig.~\ref{figure:example_2}~a), we see that P1 monitors $e_1$ and $e_3$ with equal probability, although they have different criticality levels. 
Thus, the best response of P2 is to target $e_3$, which results in the worst case loss of P1.  
Similarly, as seen in Fig.~\ref{figure:example_2} b), P2 targets the components $e_1$ and $e_3$ with equal probability.
The best response of P1 is then to monitor $e_3$, leaving P2 with the lowest payoff.
%This can also be interpreted as P1 (resp. P2) playing according to the game where all the components have weights $w_{\max}$ (resp. $w_{\min}$).

%As illustrated in Fig.~\ref{figure:example_2} b), this component is $e_3$, so the best response of P1 is to inspect $v_2$. 

%\textcolor{red}{This can also be interpreted as P1 (resp. P2) playing according to the game where all the components have weights $w_{\max}$ (resp. $w_{\min}$).}
     \begin{figure}[t]   
    \centering
  \includegraphics[width=75mm]{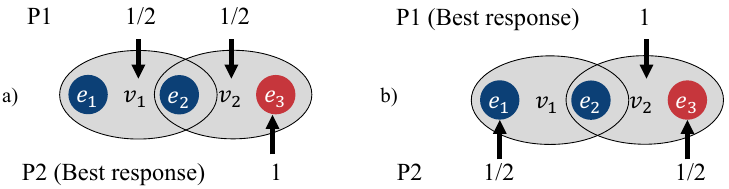}
  \caption{The figure illustrates why the strategies $\sigma^\epsilon_{1}$ and $\sigma^\epsilon_{2}$ may fail.  The criticality of red (resp. blue) components is $w_{\max}$ (resp. $w_{\min}$).}
  \label{figure:example_2}
\end{figure}

%\end{example}

Nevertheless, the set cover strategy $\sigma_1^\epsilon$ has several favorable properties.  
Firstly, we note that by playing $\sigma_1^\epsilon$, P1 cannot lose more than~\eqref{eqn:worst_case_loss_set_cover}. 
Thus, if $b_1$ is close to $n^*$, the worst case loss~\eqref{eqn:worst_case_loss_set_cover} and $\epsilon$ approach 0, and $\sigma_1^\epsilon$ represents a good approximation for equilibrium monitoring strategy.      
If $b_1$$=$$n^*$, both the worst case loss~\eqref{eqn:worst_case_loss_set_cover} and $\epsilon$ are 0, and $\sigma^\epsilon_1$ becomes a pure equilibrium strategy from Proposition~\ref{thm:purestrategies}.
Secondly, this strategy is easy to construct. 
Namely, once $V^*$ is known, one can straightforwardly find $\sigma_1^\epsilon$ that satisfies~\eqref{eqn:def_strategy_covers} (see~\cite[Lemma~1]{2017arXiv170500349D}).
Although calculating $V^*$ is NP--hard problem, modern integer linear program solvers can obtain a solution of this problem for relatively large values of $n$, and greedy heuristics can be used for finding an approximations of $V^*$ with performance guarantees~\cite{chvatal1979greedy}. 
Finally, $\sigma_1^\epsilon$ can be further improved in several ways, as discussed next.

\subsection{Improving the Set Cover Monitoring Strategy}

%We show in this section that if $\Delta_w$ is large enough, approximate equilibrium can be reached using strategies similar to P1 and P2. 

% can focus on only defending components with weight $w_{\max}$.  
%Moreover, approximate equilibrium can be reached using strategies similar to $\sigma_{1,V^*}$ and $\sigma_{2,E^*}$.\
\subsubsection{Increasing $b_1$} As we already mentioned, both the worst case loss~\eqref{eqn:worst_case_loss_set_cover} and $\epsilon$ approach 0 when $b_1$ approaches $n^*$. 
Thus, an obvious way to improve $\sigma_1^\epsilon$ is by increasing $b_1$.
%If $b_1$$=$$n^*$, both the worst case loss~\eqref{eqn:worst_case_loss_set_cover} and $\epsilon$ are 0, and $\sigma^\epsilon_1$ becomes a pure equilibrium strategy from Proposition~\ref{thm:purestrategies}.

\subsubsection{Focusing on highest criticality components} 
Assume a situation where a group of components $\bar{\mathcal{E}}$ have criticality $w_{\max}$ that is much larger compared to the criticality of the remaining components. 
In Section~\ref{section:exact}, we showed that depending on $b_1$ and the components' criticality, P1 (resp. P2) may focus on monitoring (resp. attacking) the components with the highest criticality, while neglecting the others.
Let $\bar{w}_{\max}$ be the largest criticality among the components $\mathcal{E}$$\setminus$$ \bar{\mathcal{E}}$. 
We show that if $\bar{\Delta}_w$$:=$$w_{\max}$$-$$\bar{w}_{\max}$$ \geq$$ w_{\max} \frac{ b_1}{\bar{n}^*}$ , a small modification of the strategies~\eqref{eqn:def_strategy_covers}--\eqref{eqn:att_strategy_packings} can give us a potentially improved $\epsilon$-NE.  
Particularly, let $\bar{V}^*$ (resp. $\bar{E}^*$) be a minimum set cover for $\bar{\mathcal{E}}$ (resp. maximum set packing of  $\bar{\mathcal{E}}$), $|\bar{V}^*|$$ \coloneqq$$\bar{n}^*$, $|\bar{E}^*| $$\coloneqq$$ \bar{m}^*$, and $(\bar{\sigma}^\epsilon_{1},\bar{\sigma}^\epsilon_{2})$ be a strategy profile that satisfies 
\begin{align}
\label{eqn:def_strategy_covers_bin}
\bar{\rho}_{\sigma^\epsilon_{1}}(v)&=
\begin{cases} 
\frac{b_1}{\bar{n}^*},\hspace{2mm}v\in \bar{V}^*,  \\
\hspace{2mm}0,\hspace{2mm}v\notin \bar{V}^*,
\end{cases}\\
\label{eqn:att_strategy_packings_bin}
\bar{\sigma}^\epsilon_{2}(e)&=
\begin{cases}\frac{1}{\bar{m}^*},\hspace{2mm}e\in \bar{E}^*,  \\
\hspace{3mm} 0,\hspace{2.2mm}e\notin \bar{E}^*. 
\end{cases}
\end{align}
In other words, P1 (resp. P2) focuses on monitoring (resp. targeting) the components $\bar{\mathcal{E}}$ using the strategy $\bar{\sigma}^\epsilon_{1}$ (resp. $\bar{\sigma}^\epsilon_{2}$). 
The proof that $(\bar{\sigma}^\epsilon_{1},\bar{\sigma}^\epsilon_{2})$ exists is the same as for $(\sigma^\epsilon_{1},\sigma^\epsilon_{2})$. 
The following then holds. 

\begin{proposition} \label{thm:binary_weights} 
If $\bar{\Delta}_w$$\geq $$  w_{\max}\frac{ b_1}{\bar{n}^*}$, then any strategy profile that satisfies~\eqref{eqn:def_strategy_covers_bin}--\eqref{eqn:att_strategy_packings_bin} is an $\bar{\epsilon}$-NE of $\Gamma$,  where
$$\bar{\epsilon}=b_1 w_{\max} \frac{\bar{n}^*-\max\{b_1,\bar{m}^*\}}{\bar{n}^*\max\{b_1,\bar{m}^*\}}.$$
Furthermore, for any $\sigma_2 $$\in $$\Delta_2$, we have 
\begin{equation} \label{eqn:worst_case_loss_set_cover_2}
L(\bar{\sigma}^\epsilon_{1},\sigma_2 ) \leq  w_{\max} \frac{\bar{n}^*-b_1}{\bar{n}^*}.
\end{equation}
\end{proposition}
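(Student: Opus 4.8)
The plan is to mirror the proof of Theorem~\ref{thm:mix_strategies_diff_indexes}, but applied to the subgame restricted to the high--criticality set $\bar{\mathcal{E}}$, exploiting the fact that every component in $\bar{\mathcal{E}}$ carries the \emph{same} criticality $w_{\max}$. Concretely, I would bound the best responses against $\bar{\sigma}^\epsilon_{1}$ and against $\bar{\sigma}^\epsilon_{2}$ so as to sandwich the value between an upper bound $\bar{L}$ and a lower bound $\underline{L}$, and then set $\bar{\epsilon}=\bar{L}-\underline{L}$. Because $\sigma^\epsilon$ and $\bar{\sigma}^\epsilon$ share the same functional form (compare~\eqref{eqn:def_strategy_covers}--\eqref{eqn:att_strategy_packings} with~\eqref{eqn:def_strategy_covers_bin}--\eqref{eqn:att_strategy_packings_bin}), the final simplification of $\bar{L}-\underline{L}$ into the claimed closed form is identical to the computation in Theorem~\ref{thm:mix_strategies_diff_indexes}, now with $w_{\min}$ replaced by $w_{\max}$ and $\Delta_w$ by $0$; I would not reproduce that algebra.

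For the upper bound I would fix a pure attack $e$ and split into two cases. If $e \in \bar{\mathcal{E}}$, then since $\bar{V}^*$ covers $\bar{\mathcal{E}}$, the component is monitored with probability at least $b_1/\bar{n}^*$, so $L(\bar{\sigma}^\epsilon_{1},e) \leq w_{\max}(\bar{n}^*-b_1)/\bar{n}^* = \bar{L}$, exactly as in~\eqref{eqn:lb_equal_prob}. If $e \notin \bar{\mathcal{E}}$, then P1 may never monitor $e$ (she places sensors only on $\bar{V}^*$), so $L(\bar{\sigma}^\epsilon_{1},e)$ can be as large as $w_e \leq \bar{w}_{\max}$. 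This second case is the one genuinely new ingredient, and it is precisely where the hypothesis enters: I would show $\bar{w}_{\max} \leq \bar{L}$ by observing that the target inequality $w_{\max}-\bar{\Delta}_w \leq w_{\max}(1 - b_1/\bar{n}^*)$ rearranges to exactly $\bar{\Delta}_w \geq w_{\max} b_1/\bar{n}^*$. Hence both cases yield $L(\bar{\sigma}^\epsilon_{1},e) \leq \bar{L}$, and linearity extends the bound to every $\sigma_2 \in \Delta_2$, establishing~\eqref{eqn:worst_case_loss_set_cover_2}.

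For the lower bound I would fix any $V$ with $|V| \leq b_1$ and evaluate $L(V,\bar{\sigma}^\epsilon_{2})$. Since $\bar{\sigma}^\epsilon_{2}$ is supported on $\bar{E}^* \subseteq \bar{\mathcal{E}}$ and each such component is worth $w_{\max}$, the payoff equals $(w_{\max}/\bar{m}^*)$ times the number of packing elements left unmonitored by $V$. As $\bar{E}^*$ is a set packing, each sensor covers at most one of its elements, so at most $b_1$ are monitored and at least $\max\{0,\bar{m}^*-b_1\}$ remain, giving $L(V,\bar{\sigma}^\epsilon_{2}) \geq w_{\max}\max\{0,\bar{m}^*-b_1\}/\bar{m}^* = \underline{L}$. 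Combining the two bounds yields $\underline{L} \leq L(\bar{\sigma}^\epsilon_{1},\bar{\sigma}^\epsilon_{2}) \leq \bar{L}$, so $(\bar{\sigma}^\epsilon_{1},\bar{\sigma}^\epsilon_{2})$ is a $\bar{\epsilon}$--NE with $\bar{\epsilon}=\bar{L}-\underline{L}$, which simplifies to the stated expression as in Theorem~\ref{thm:mix_strategies_diff_indexes}.

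I expect the only real subtlety to be the case $e \notin \bar{\mathcal{E}}$ in the upper bound: unlike in Theorem~\ref{thm:mix_strategies_diff_indexes}, here P1 deliberately abandons the low--criticality components, so one must certify that no abandoned component is attractive enough for P2 to exploit. The hypothesis $\bar{\Delta}_w \geq w_{\max} b_1/\bar{n}^*$ is exactly the threshold guaranteeing this, and verifying the equivalence above is the crux of the argument; everything else is a direct transcription of the Theorem~\ref{thm:mix_strategies_diff_indexes} proof under the homogeneous criticality $w_{\max}$ on $\bar{\mathcal{E}}$.
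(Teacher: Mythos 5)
Your proposal is correct and follows essentially the same route as the paper's proof: the same two-case split for the upper bound (with the hypothesis $\bar{\Delta}_w \geq w_{\max}\frac{b_1}{\bar{n}^*}$ entering exactly where you place it, to certify $\bar{w}_{\max} \leq \bar{L}$ for abandoned components), the same set-packing argument for the lower bound, and the same algebraic simplification of $\bar{L}-\underline{L}$ inherited from Theorem~\ref{thm:mix_strategies_diff_indexes}. Your write-up is in fact more explicit than the paper's, which compresses both bounds into references to the proof of Theorem~\ref{thm:mix_strategies_diff_indexes}.
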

\begin{proof}
Assume P1 plays according to~\eqref{eqn:def_strategy_covers_bin}. 
If P2 attacks $e $$\in  $$\bar{\mathcal{E}}$, we can show using the same reasoning as in the proof  of Theorem~\ref{thm:mix_strategies_diff_indexes} that $L(\bar{\sigma}^\epsilon_{1},e) $$ \leq $$  w_{\max} \frac{\bar{n}^*-b_1}{\bar{n}^*} $$= $$\bar{L}. $
If P2 attacks $e$$\in $$ \mathcal{E} $$\setminus $$ \bar{\mathcal{E}}$, we have
$$L(\bar{\sigma}^\epsilon_{1},e)
\stackrel{(*)}{\leq} \bar{w}_{\max} \stackrel{(**)}{\leq}  w_{\max} \frac{\bar{n}^*-b_1}{\bar{n}^*}=\bar{L}, $$
where (*) follows from the fact that the largest loss occurs when $e$ is unmonitored and has criticality $\bar{w}_{\max}$, and (**) from $\bar{\Delta}_w$$\geq $$w_{\max}$$ \frac{ b_1}{\bar{n}^*}$.  
Thus, P1 looses at most $\bar{L}$ by playing according to $\sigma^\epsilon_{1}$.
If P2 plays according to~\eqref{eqn:att_strategy_packings_bin}, we obtain
\begin{equation*}
L(V,\bar{\sigma}^\epsilon_{2})\geq  w_{\max}\frac{ \max{ \{0,\bar{m}^* -b_1\} }}{\bar{m}^*}=\underline{L},
\end{equation*}
by following the same steps as in the proof of Theorem~\ref{thm:mix_strategies_diff_indexes}. 
Thus, $(\bar{\sigma}^\epsilon_{1},\bar{\sigma}^\epsilon_{2})$ is an $\bar{\epsilon}$--NE with $\bar{\epsilon}=\bar{L}-\underline{L}$. 
\end{proof}

Proposition~\ref{thm:binary_weights} has two consequences. 
Firstly, since $\bar{n}^*$$\leq $$n^*$, the worst case loss~\eqref{eqn:worst_case_loss_set_cover_2} achieved with strategy $\bar{\sigma}_1^\epsilon$ cannot be larger than the one given by~\eqref{eqn:worst_case_loss_set_cover}.  
Secondly, if $\bar{n}^*$$=$$\bar{m}^*$, we have that any strategy profile that satisfies ~\eqref{eqn:def_strategy_covers_bin}--\eqref{eqn:att_strategy_packings_bin} is a NE, so
$\bar{\sigma}_1^\epsilon$ is an equilibrium monitoring strategy. 
%Furthermore, if $\bar{w}_{\max}$$\approx $$0$, the condition from the proposition turns into $b_1 $$\leq $$\bar{n}^*$.  
%This implies that as long as P1 is not able to monitor all the components from $\bar{\mathcal{E}}$ at once, she does not monitor any of the components from $\mathcal{E}$$\setminus$$ \bar{\mathcal{E}}$. 

%\subsection{Brief Introduction to CGP}
\subsubsection{Numerical approach} 
We now briefly explain how CGP~\cite{desrosiers2005primer} can be used for improving the set cover monitoring strategy $\sigma_1^\epsilon$. 
We refer the interested reader to the Appendix for more details.
We begin by rewriting $\text{LP}_1$ in the form 
\begin{equation}\label{eqn:cg_problem}
\underset{\sigma_1 \geq0,z_1 \geq0}{\text{minimize}} \hspace{2mm} z_1 \hspace{3mm} \text{subject to }A \sigma_1+\textbf{1} z_1 \geq 0, \hspace{1mm}\textbf{1}^T\sigma_1=1,
 \end{equation}
 where $A$ is a matrix representation of $\Gamma$.  
Note that every element of $\sigma_1$ corresponds to a possible pure strategy from $\mathcal{A}_1$.
Since the number of pure strategies grows quickly with $b_1$, we cannot directly solve~\eqref{eqn:cg_problem} due to the size of decision vector.
However, the number of inequality constrains is always $m$, which allow us to use CGP to solve~\eqref{eqn:cg_problem}.  

The first step of CGP is to solve the master problem, which is obtained from~\eqref{eqn:cg_problem} by considering only a subset $\tilde{\mathcal{A}}_1$ of pure strategies.
Hence, to form the master problem, we only generate columns of $A$ that correspond to variables $\tilde{\mathcal{A}}_1$, which explains the name of the procedure. 
In our case, we initialize $\tilde{\mathcal{A}}_1$ with those pure strategies that are played with non-zero probability once P1 employs the set cover monitoring strategy $\sigma^\epsilon_1$ (see~\cite[Lemma~1]{2017arXiv170500349D} for construction of these strategies).  
Once a solution $(\tilde{z}_1^*,\tilde{\sigma}_1^*)$ of the master problem is calculated,
one solves the sub-problem 
\begin{equation}\label{eqn:cg_SP_problem}
\text{maximize}_{V \in \mathcal{A}_1} \hspace{2mm} (\rho^*)^T a_V + \pi^*,
\end{equation}
where $(\rho^*,\pi^*)$ is a dual solution of the master problem and $a_V$ is the column of $A$ that corresponds to a pure strategy $V$. 
If the optimal value of~\eqref{eqn:cg_SP_problem} is negative, $\tilde{z}_1^*$ can be decreased.  
We then add a solution of~\eqref{eqn:cg_SP_problem} to $\tilde{\mathcal{A}}_1$, and proceed to the next iteration. 
Otherwise, $\tilde{z}_1^*$ (resp. $\tilde{\sigma}_1^*$) is the optimal value of the game (resp. an equilibrium monitoring strategy), and we stop the procedure.  

The key point of CGP is to be able to solve~\eqref{eqn:cg_SP_problem} efficiently, which is not necessarily the case for every linear program.
However, in case of $\text{LP}_1$, $A$ is determined based on the loss function $l$ and has a structure that allow us to obtain a solution and the optimal value of~\eqref{eqn:cg_SP_problem} by solving a binary linear program.
Additionally, this program has $n$$+$$m$ binary decision variables and $m$$+$$1$ constrains for any $b_1$, so it can be solved efficiently for relatively large values of $n$ and $m$ using modern solvers.
This allow us to use CGP to find or approximate an equilibrium monitoring strategy for the networks of relatively large size, as shown in the next section.

%It turns out that ($\text{LP}_1$) have the properties that allow us to efficiently use the column generation procedure to solve it.
%Firstly, although the number of decision variables of this problem grows exponentially with $b_1$, the number of the constraints always equals $m$.
%Secondly, inequality constraints of ($\text{LP}_1$) are determined based on the loss function, and have a structure that allow us to formulate SP as an integer linear program with $n+m$ binary decision variables and $m$ constrains.
%We show in the section that follows that this procedure allow us to solve ($\text{LP}_1$) for the networks of relatively large size. 
%Moreover, even if are not able to obtain an exact solution, we can still run a limited number of iterations of column generation procedure to improve suboptimal strategies that we found in the previous section.

\section{Numerical Study}\label{section:simulations}

%\begin{table}
%\centering
%\begin{tabular}{|c|c|l|c|c|c|c|}
%\hline
%\multirow{2}{*}{Network}                                              & \multicolumn{2}{c|}{\multirow{2}{*}{1}} & \multirow{2}{*}{2} & \multirow{2}{*}{3} & \multirow{2}{*}{4} & \multirow{2}{*}{5} \\
%                                                                      & \multicolumn{2}{c|}{}                   &                    &                    &                    &                    \\ \hline
%\begin{tabular}[c]{@{}c@{}}Number of \\ nodes/components\end{tabular} & \multicolumn{2}{c|}{30}                 & 57                 & 118                & 300                & 2383               \\ \hline
%$b_1$                                                                 & \multicolumn{2}{c|}{5}                  & 10                 & 15                 & 20                 & 25                 \\ \hline
%\end{tabular}
%\end{table}
%\subsection{Model}
We now test CGP on benchmarks of large scale water networks ky4 and ky8~\cite{jolly2013research}.
These networks can be modeled with a directed graph.
The vertices of the graph model pumps, junctions, and water tanks.
The edges model pipes, and the edge direction is adopted to be in the direction of the water flow. 
We consider attacks where P2 injects contaminants in a water network, while P1 allocates sensors to detect contaminants. 
In this case, $\mathcal{E}$ are the locations where contaminants can be injected, and $\mathcal{V}$ are the locations where sensors can be placed.
We adopt $\mathcal{E}$ and $\mathcal{V}$ to be the vertices of the water network graph. 
The monitoring sets were formed as follows: if a water flow from contamination source $e$ passes through $v$, then $e$ belongs to $E_v$~\cite{de2019optimal}.
 Criticality $w_{e}$ in this case can characterize the normalized population affected by contaminants injected in $e$~\cite{berry2005sensor}. 
For simplicity, we generated $w_e$ randomly.
We remark that $n$$=$$m$$=$$964$ (resp. $n$$=$$m$$=$$1332$) for ky4 (resp. ky8) network. 

We first measured how much time does it take to construct the set cover monitoring strategy $\sigma_1^\epsilon$, and to further improve it to an equilibrium monitoring strategy using CGP. 
We considered ky4 and ky8 networks, and varied $b_1$.
The results are shown in Fig.~\ref{figure:sim_1}. 
Notice that the longest running time was 1180 seconds, which demonstrates that CGP may allow us to improve $\sigma_1^\epsilon$ to an equilibrium monitoring strategy for the networks of relatively large size. 
However, we also see that the running time rapidly grows with $b_1$ and the network size.
This indicates that this way of calculating an equilibrium monitoring strategy may become inefficient if the network size exceeds several thousand nodes.

     \begin{figure}[t]   
    \centering
  \includegraphics[width=80mm]{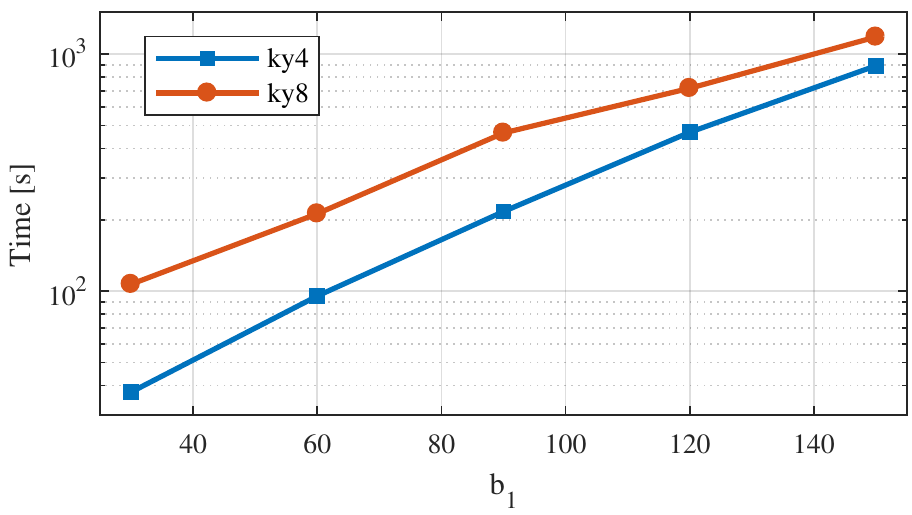}
  \caption{ Time needed to calculate an equilibrium monitoring strategy using CGP for different values of $b_1$. }
  \label{figure:sim_1}
\end{figure}

    \begin{figure}[t]   
    \centering
  \includegraphics[width=80mm]{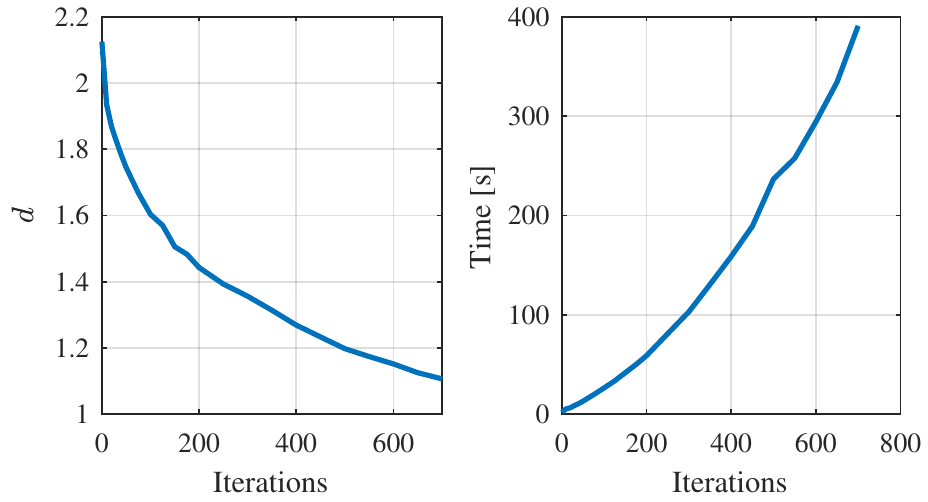}
  \caption{ Improving the set cover monitoring strategy $\sigma_1^\epsilon$ by running a limited number of CGP iterations. }
  \label{figure:sim_2}
\end{figure}

Therefore, we also explored how much can we improve $\sigma_1^\epsilon$ by running only a limited number of iterations of CGP. 
We considered ky8 network, and adopted $b_1$$=$$150$. 
As the performance metric, we used the ratio $d(i)$$:=$$\bar{L}(i)/L(\sigma^*_1,\sigma^*_2),$ where $\bar{L}(i)$ is the optimal value of the master program after $i$ iterations. 
The value $\bar{L}(i)$ upper bounds the value of the game, and represents the worst case loss
of P1 if she uses a monitoring strategy obtained by running $i$ iterations of CGP.
Hence, if $d(i)$$=$$1$, then $\bar{L}(i)$$=$$L(\sigma^*_1,\sigma^*_2)$, and CGP recovers an equilibrium monitoring strategy after $i$ iterations. 

The plot of $d$ and the execution time with respect to the number of iterations is shown in Fig.~\ref{figure:sim_2}.
Same as in the previous experiment, the execution time includes the time to construct the set cover monitoring strategy $\sigma_1^\epsilon$.
Although initially $d(0)$$\approx$$2$, $d$ reaches the value 1.11 after 700 iterations. 
We also indicate that the running time to achieve this improvement was 391 seconds, which is approximately 3 times shorter compared to the time to obtain an equilibrium monitoring strategy for $b_1$$=$$150$. 
This indicates that even if CGP may not be used to improve $\sigma_1^\epsilon$ to an equilibrium monitoring strategy, we can still significantly improve this strategy by running a limited number of CGP iterations.

\section{Conclusion}\label{section:conclusion}

This paper investigated a network monitoring game, with the purpose of developing monitoring strategies. 
The operator's (resp. attacker's) goal was to deploy sensors (resp. attack a component) to minimize (resp. maximize) the loss function defined through the component criticality. 
Our analysis revealed how criticality levels impact a NE, and outlined some fundamental differences compared to the related game~\cite{2017arXiv170500349D}. 
Particularly, the operator can leave some of the noncritical components unmonitored based on their criticality and available budget, while the attacker does not necessarily need to attack these components.
Next, we proved that previously known strategies~\cite{2017arXiv170500349D} can be used to obtain an $\epsilon$--NE, and showed how $\epsilon$ depends on component criticality.
Finally, we discussed how to improve the monitoring strategy from this $\epsilon$-NE.
It was shown that if a group of the components have criticality level sufficiently larger then the others, the strategy can be improved by a simple modification.
We also demonstrated that the strategy can be improved numerically using the column generation procedure.

The future work will go into two directions. 
Firstly, we plan to find the way to characterize and analyze properties of a NE in the general case of the game.  
%(ii)~Derive a scalable way for calculating a NE;
%Secondly, as the results in this paper showed, the existing methods for calculating a NE do not scale well with the size of the network.
%Thus, we plan to develop a more scalable method for finding a NE sensing strategy. 
%At this point, the linear programing approach introduced in Section~\ref{section:third_special_case} seems as a promising direction to explore. 
Secondly, we intend to generalize the game model by relaxing some of the modeling assumptions.
For instance, to allow the attacker to target several components simultaneously,  and to remove the assumption that deployed sensors are perfectly secured.

\section*{Appendix: Column Generation Procedure} \label{appedix:CG}
CGP can be used to solve linear programs with a large number of decision variables and a relatively small number of constraints~\cite{desrosiers2005primer}, such as $\text{LP}_1$. 
The first step of CGP is to solve the master problem of $\text{LP}_1$, which can be formulated~as
\begin{equation}\label{eqn:MP_LP1}
  \begin{aligned}
&\underset{\tilde{z}_1\geq0,\tilde{\sigma}_1 \geq0}{\text{minimize}} \hspace{-2mm}&&\tilde{z}_1\\
&\text{subject to}&& \sum_{V \in \tilde{\mathcal{A}}_1}a_V  \tilde{\sigma}_1(V) + \textbf{1} \tilde{z}_1 \geq 0, \\
& && \sum_{V \in \tilde{\mathcal{A}}_1}\tilde{\sigma}_1(V)=1, 
 \end{aligned}
\end{equation}
where $ a_V$$ \in$$ \mathbb{R}^m$ is given by 
\begin{equation}\label{eqn:aV}
  a_V (i)=
\begin{cases}-w_{e_i},\hspace{2mm}e_i \notin E_V,  \\
\hspace{5.8mm} 0,\hspace{2.4mm}e_i \in E_V. 
\end{cases}
\end{equation}
The only difference between~\eqref{eqn:MP_LP1} and $\text{LP}_1$ is that we consider only a subset of pure actions $\tilde{\mathcal{A}}_1$ instead of the whole set $\mathcal{A}_1$. 
As mentioned before, we initialize $\tilde{\mathcal{A}}_1$ with pure strategies that are played with non-zero probability once P1 employs the set cover monitoring strategy $\sigma^\epsilon_1$ (see~\cite[Lemma~1]{2017arXiv170500349D} for construction of these strategies).  

Let ($\tilde{z}_1^*$,$\tilde{\sigma}^*_1$) be a solution of~\eqref{eqn:MP_LP1}. 
The next step is to check if $\tilde{z}_1^*$ can be further decreased, which can be done by solving the following subproblem
  \begin{equation}\label{eqn:reduced_cost}
 \tilde{c}:= \text{minimize}_{V \in \mathcal{A}_1} \hspace{2mm} -\sum_{i=1}^m \rho^*_{i} a_V(i) -\pi^*,
 \end{equation}
 where $\rho^*$$ \in$$ \mathbb{R}^m$ (resp.  $\pi^* $$\in$$ \mathbb{R}$) is an optimal dual solution of~\eqref{eqn:MP_LP1} that corresponds to the inequality constraints (resp. equality constraint).
If $\tilde{c}$$<$$0$, $\tilde{z}_1^*$ can be further decreased.
We then add a solution of~\eqref{eqn:reduced_cost} to $\tilde{\mathcal{A}}_1$, and repeat the procedure with the new set $\tilde{\mathcal{A}}_1$. 
Yet, if $\tilde{c}$$\geq $$0$, $\tilde{z}_1^*$ is the optimal value of $\text{LP}_1$, and $\tilde{\sigma}^*_1$ is an equilibrium monitoring strategy. 

However, the crucial point of CGP is to find an efficient way to solve~\eqref{eqn:reduced_cost}.  
Namely, due to the large cardinality of $\mathcal{A}_1$, it is not tractable to simply go through all the columns $a_V$ and pick the optimal one. 
In our case, we can avoid this by solving the following binary linear program to obtain a solution and the optimal value of~\eqref{eqn:reduced_cost} 
\begin{equation}\label{eqn:subproblem}
\begin{aligned}
&\underset{x \in \{0,1\}^{n}, y  \in \{0,1\}^{m}}{\text{minimize}} \hspace{-2mm}&& \sum_{e_i \in \mathcal{E}} \rho^*_{i} w_{e_i} y_{e_i}-\pi^*\\
&\hspace{5mm}\text{subject to}&&  \sum_{ \substack{v \in \mathcal{V}\\ e\in E_{v}}} \hspace{-1mm}x_{ v} \hspace{-1mm}\geq \hspace{-1mm}1 - y_{e} , \forall e \in\mathcal{E},\\ 
& &&  \sum_{ v \in \mathcal{V}} x_{v} \leq b_1. 
 \end{aligned}
  \end{equation}
Note that this program has $n$$+$$m$ binary variables and $m$$+$$1$ constraints regardless of $b_1$. 
Therefore, modern day integer linear program solvers can obtain a solution and the optimal value of~\eqref{eqn:MP_LP1} for relatively large values of $n$ and $m$. 
We conclude by showing how to obtain a solution and the optimal value of~\eqref{eqn:reduced_cost} by solving~\eqref{eqn:subproblem}. 

\begin{lemma}
Let $\tilde{c}$ (resp. $\tilde{x},\tilde{y}$) be the optimal value (resp. a solution) of~\eqref{eqn:subproblem}.  
Let $\tilde{V}$ be formed as follows: if $\tilde{x}_{v}$$=$$0$ (resp. $\tilde{x}_{v}$$=$$1$), then $v $$\notin $$\tilde{V}$ ($v $$\in $$\tilde{V}$). 
Then $\tilde{c}$ (resp. $\tilde{V}$) is the optimal value (resp. a solution) of~\eqref{eqn:reduced_cost}. 
\end{lemma}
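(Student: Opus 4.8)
The plan is to show that the binary program~\eqref{eqn:subproblem} and the subproblem~\eqref{eqn:reduced_cost} share the same optimal value, and that a minimizer of the former induces a minimizer of the latter through the map $\tilde V=\{v\in\mathcal V \mid \tilde x_v=1\}$. First I would put the objective of~\eqref{eqn:reduced_cost} into closed form. Substituting~\eqref{eqn:aV}, for any $V$ one has $-\sum_{i=1}^m \rho^*_i a_V(i)-\pi^* = \sum_{i:\,e_i\notin E_V}\rho^*_i w_{e_i}-\pi^*$, since $a_V(i)=-w_{e_i}$ exactly when $e_i\notin E_V$ and vanishes otherwise. Hence~\eqref{eqn:reduced_cost} is the minimization of $\sum_{i:\,e_i\notin E_V}\rho^*_i w_{e_i}-\pi^*$ over all $V\subseteq\mathcal V$ with $|V|\le b_1$.

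Next I would set up the correspondence between the feasible sets of the two programs. Reading $x_v$ as the indicator that a sensor is placed at $v$, the budget constraint $\sum_{v}x_v\le b_1$ in~\eqref{eqn:subproblem} is precisely the requirement $|V|\le b_1$, i.e. $V\in\mathcal A_1$; this gives a bijection between the admissible binary vectors $x$ and the admissible placements $V$.

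The crux is the role of the auxiliary variables $y_e$, and here I would invoke the nonnegativity of the dual. Because~\eqref{eqn:MP_LP1} is a minimization with $\ge$ inequality constraints, LP duality yields $\rho^*\ge 0$; combined with $w_e>0$ this makes every objective coefficient $\rho^*_i w_{e_i}$ nonnegative, so the minimization over $y$ wants each $y_e$ as small as the constraints allow. For fixed $x$ (hence fixed $V$), the constraint $\sum_{v:\,e\in E_v}x_v\ge 1-y_e$ reduces to $0\ge 1-y_e$ when $e\notin E_V$ (forcing $y_e=1$), and is already slack at $y_e=0$ when $e\in E_V$. Thus the optimal $y$ obeys $y_e=1$ iff $e\notin E_V$, and the objective collapses to $\sum_{i:\,e_i\notin E_V}\rho^*_i w_{e_i}-\pi^*$, matching the closed form from the first step.

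Combining these, the optimal value of~\eqref{eqn:subproblem} equals $\min_{|V|\le b_1}\bigl(\sum_{i:\,e_i\notin E_V}\rho^*_i w_{e_i}-\pi^*\bigr)=\tilde c$, and the placement $\tilde V$ recovered from the optimal $\tilde x$ attains this minimum, so $\tilde V$ solves~\eqref{eqn:reduced_cost}. I expect the main obstacle to be exactly the $y$-variable analysis: the equivalence relies on $\rho^*\ge 0$, without which the minimization could set $y_e=1$ even for monitored components and decouple $y$ from the monitoring status. A minor caveat arises when $\rho^*_i=0$, in which case $y_{e_i}$ is not uniquely pinned down, but this affects neither the optimal value nor the recovered placement $\tilde V$, which depends only on $\tilde x$.
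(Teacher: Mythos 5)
Your proposal is correct and rests on exactly the same ingredients as the paper's proof: the closed form of the objective via~\eqref{eqn:aV}, the identification of feasible binary vectors $x$ with placements $V$ through the budget constraint, and the key observation that $\rho^*\geq 0$ and $w_e>0$ force the partial minimization over $y$ to set $y_e=1$ precisely for unmonitored components. The only difference is organizational: you project out $y$ uniformly over all feasible $x$ to get a direct equivalence of the two programs, whereas the paper performs the $y$-analysis only at the optimum $(\tilde{x},\tilde{y})$ and then closes the reverse inequality by a contradiction argument that constructs the very same $(x',y')$ correspondence you use.
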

\begin{proof}Firstly, note that $|\tilde{V}|\leq b_1$ since $\tilde{x}$ has to satisfy the second constraint of~\eqref{eqn:subproblem}. 
Thus, $\tilde{V}$ is a feasible point of~\eqref{eqn:reduced_cost}.   
We now show that $\tilde{V}$ is a solution of~\eqref{eqn:reduced_cost}, and that  the optimal values of~\eqref{eqn:reduced_cost} and~\eqref{eqn:subproblem} coincide. 

Note that $\rho^*$$\geq$$ 0$ as a dual solution of~\eqref{eqn:MP_LP1}, $w_{e}$$>$$0$, and the objective of~\eqref{eqn:subproblem} reduces to minimizing $\sum_{e_i \in \mathcal{E}} \rho^*_{i} w_{e_i} y_{e_i}$. 
Thus, for fixed $\tilde{x}$, the best way to minimize the objective is to set as many elements of $y$ to 0. 
Yet, an element $y_{e_i}$ can be set to zero only if $\sum_{ v \in \mathcal{V},e_i\in E_{v}}$$ x_{v}$$\geq$$1$.
This happens once $e_i $$\in$$ E_{\tilde{V}}$. 
Otherwise, $y_{e_i}$$=$$1$ has to hold in order for a constraint to be satisfied.  
Hence, for a fixed $\tilde{x}$, the lowest objective value that can be achieved over all feasible $y$ is
\begin{equation}\label{eqn:appdendex_statement1} 
\tilde{c}= \sum_{e_i \in \mathcal{E},e_i \notin E_{\tilde{V}}} \rho^*_{i} w_{e_i} -\pi^*. 
\end{equation}
On the other hand, the value of the objective function from~\eqref{eqn:reduced_cost} for $\tilde{V}$ is given by 
\begin{equation}\label{eqn:appdendex_statement2} 
- \sum_{e_i \in \mathcal{E}} \rho^*_{i} a_{\tilde{V}}(i)-\pi^*\stackrel{\eqref{eqn:aV}}{=}\sum_{e_i \in \mathcal{E},e_i \notin E_{\tilde{V}}} \rho^*_{i} w_{e_i}-\pi^*\stackrel{\eqref{eqn:appdendex_statement1} }{=}\tilde{c}.
\end{equation}
From~\eqref{eqn:appdendex_statement2}, it follows that the optimal value of~\eqref{eqn:reduced_cost} is at least $\tilde{c}$. 
We now finalize the proof by showing that the optimal value of~\eqref{eqn:reduced_cost} cannot be lower than $\tilde{c}$ using contradiction.
Let $V'$ be a solution of~\eqref{eqn:reduced_cost}, and assume $c'$$ <$$ \tilde{c}$.  
Let $x'$ be constructed as follows: $x'_{v }$$=$$0$ (resp. $x'_{v }$$=$$1$) if $v $$\notin $$V'$ (resp. $v $$\in $$V'$).
Since $|V'|$$\leq$$ b_1$, $x'$ satisfies the constraints of~\eqref{eqn:subproblem}.
For this $x'$, let $y'_{e_i }$$=$$0$ (resp. $y'_{e_i }$$=$$1$) if $e_i $$\in $$E_{V'}$ ($e_i$$\notin$$E_{V'}$).
This $y'$ also satisfies the constraints of~\eqref{eqn:subproblem}, and we have  
\begin{align*}
 \sum_{e_i\in \mathcal{E} } \rho^*_{i} w_{e_i}y'_{e_i}-\pi^*&=\sum_{e_i\in \mathcal{E},e_i \notin E_{V'}} \rho^*_{i} w_{e_i}-\pi^* \\
&\stackrel{\eqref{eqn:aV}}{=}\sum_{e_i\in \mathcal{E}} \rho^*_{i} a_{V'}(i)-\pi^*=c'.
\end{align*}
This contradicts the assumption that the optimal value of~\eqref{eqn:subproblem} equals $\tilde{c}$, since $c' $$<$$ \tilde{c}$.
Thus, $V'$ cannot exist, and $\tilde{c}$ (resp. $\tilde{V}$) is the optimal value (resp. a solution) of~\eqref{eqn:subproblem}. 
\end{proof}

\balance

\bibliographystyle{IEEEtran}
\bibliography{CDC_2019_BIB}

\end{document}